\newtheorem{theorem}{Theorem}
\newtheorem{definition}{Definition}
\newtheorem{proposition}[theorem]{Proposition}
\newtheorem{remark}{Remark}
\newtheorem{example}{Example}
\newcommand{\inte}{\text{int}}
\newcommand{\R}{\ensuremath{\mathbb{R}}\xspace} 
\newcommand{\N}{\ensuremath{\mathbb{N}}\xspace} 
\newcommand{\W}{\ensuremath{\mathbb{W}}\xspace} 
\newcommand{\D}{\ensuremath{\mathbb{D}}\xspace} 
\newcommand{\X}{\ensuremath{\mathbb{X}}\xspace}
\newenvironment{proof}[1][Proof]{\par\noindent\textit{#1.} }{\hfill$\square$\par}
\newcounter{myalg}
\newenvironment{myalgorithm}[1][]{
    \refstepcounter{myalg}
    \par\medskip
    \noindent\textbf{Algorithm~\themyalg. #1 Set attractivity for uncertain linear switched systems~\eqref{eq:linearss}.}
    \begin{enumerate}
}{
    \end{enumerate}\par\medskip
}
\begin{document}

\begin{frontmatter}

\title{Certifying Set Attractivity for Discrete-Time Uncertain Nonlinear Switched Systems\thanksref{footnoteinfo}} 

\thanks[footnoteinfo]{Work funded by the European Union through the ERC INSPIRE grant (project number 101076926); views and opinions expressed are however those of the authors only and do not necessarily reflect those of the European Union or the European Research Council; neither the European Union nor the granting authority can be held responsible for them.}

\author[First]{Alejandro Anderson} 
\author[Second]{Esteban A. Hernandez-Vargas} 
\author[First]{Giulia Giordano}

\address[First]{Dipartimento di Ingegneria Industriale, Università di Trento, Trento, Italy (e-mail: alelanderson@gmail.com, giulia.giordano@unitn.it).}
\address[Second]{Department of Mathematics and Statistical Science, University of Idaho, Moscow, Idaho, USA (e-mail: esteban@uidaho.edu)}

\begin{abstract}
We introduce a new class of functions, called \emph{Attractivity Guarantee (AG)-functions}, to certify the attractivity of sets for uncertain nonlinear switched systems in discrete time. The existence of an AG-function associated with a set guarantees the robust local attractivity of that set under the system dynamics.
We propose a constructive method for obtaining piecewise-continuous AG-functions based on contractive sets for the system, 
and show that the existence of a robust control contractive set for the dynamics implies the existence of an appropriate AG-function, and hence the robust local attractivity of the set itself.
We illustrate the proposed framework through examples that elucidate the theoretical concepts, and through the case study of a nonlinear switched system modelling antimicrobial resistance, which highlights the practical relevance of the approach to the analysis of biological systems.
\end{abstract}
\begin{keyword}
Uncertain Systems; Switched Systems; Attractivity; Invariance; Set-based Control
\end{keyword}

\end{frontmatter}

\section{Introduction}
Switched systems model complex systems where discrete factors, or switching laws, influence the continuous evolution of the states \citep{liberzon1999basic,liberzon2003switching}, making the system dynamics switch between different configurations, called modes.
When the switching law is a decision variable, switched systems provide a natural framework for practical biomedical applications \citep{anderson2021}, such as multi-drug cycling strategies in the treatment of cancer \citep{giordano2016,devia2019,wu2022switched} and of viral and bacterial infections \citep{kouyos2011informed,hernandez2011discrete,hernandez2012,hernandez2013,hernandez2014,hernandez2021switching,Tetteh2023,anderson2024stabilizability,anderson2025computational,anderson2025invariant}, functional electrical stimulation cycling for patients with neuromuscular disorders \citep{cousin2021switched} and insulin sensitivity management in  artificial pancreas \citep{cavallo2025insulin}. Other application areas include ecological systems \citep{anderson2023cyclic}, as well as power systems \citep{russo2022state}, traffic control \citep{hajiahmadi2016robust} and robotic systems \citep{aguiar2005stability,lee2008uniform,chen2019switched}.

A vast body of literature has investigated the stability analysis of switched systems, as well as the design of stabilising switching laws \citep{liberzon1999basic,liberzon2003switching,sun2011stability}, with a particular focus on linear and affine switched systems \citep{lin2009stability,jungers2017feedback,heemels2016lyapunov}, both in continuous time \citep{hespanha2004,geromel2006stabilityCT,wang2019stability} and in discrete time \citep{geromel2006stability,fiacchini2014necessary,deaecto2018stability,fiacchini2018stabilization,gomide2018stability,egidio2019,fiacchini2021yet}. Nonlinear switched systems have also been considered, both in continuous time \citep{colaneri2008stabilization,lee2008uniform,zhao2008stability,yang2014survey,liu2019separable,khoa2022exponential,zagabe2025uniform} and in discrete time \citep{liu2015stability,zhang2020stability,deaecto2025stabilisation}.

Although the literature is mainly devoted to uncertainty-free switched systems, the robust stability analysis and stabilisation of uncertain switched systems has also been investigated, particularly in the linear case \citep{lin2007switching,allerhand2010robust,son2020robust}, and in the nonlinear case too \citep{aguiar2005stability,niu2013robust,sun2013stability,hajiahmadi2016robust,noghredani2021robust}.

A powerful approach for the stability analysis of linear switched systems is offered by Lyapunov methods, which also enable the design of a switching law that stabilises the system  \citep{geromel2006stability,lin2009stability,deaecto2018stability}; yet, finding a common Lyapunov function that ensures stability is a nontrivial task \citep{mason2023universal}. Even more so, proving the existence of a stabilising switching law for nonlinear switched systems through a Lyapunov-based analysis is extremely challenging in general \citep{shorten2007stability}.

An alternative to classic Lyapunov analysis relies on controllability of invariant sets \citep{Blanchinibook15}, which provides a naturally robust framework to tackle uncertainties, identify stable and attractive regions, and offer geometric certificates of asymptotic stability and constraint satisfaction \citep{danielson2019necessary}. The concepts of control invariance and set attractivity have been extensively studied for linear switched systems \citep{xiang2017reachable,fiacchini2021yet,cinto2025switching,perez2025characterization}.
The set-theoretic approach by \cite{fiacchini2014necessary}, providing necessary and sufficient conditions for the stabilisability of uncertainty-free linear switched systems based on the existence of a contractive invariant set, was later extended by \cite{anderson2024stabilizability} to account for uncertainties.
However, attractivity analysis based on invariant sets remains
a challenging open problem for nonlinear switched systems subject to uncertainties; to the best of our knowledge, no general result is currently available for \textit{uncertain nonlinear switched systems}.

Here, we address the problem of certifying set attractivity for uncertain nonlinear switched systems with state constraints in discrete time, when the switching law is a decision variable. We introduce the class of \emph{AG-functions} — a relaxation of the concept of control Lyapunov function, with weaker requirements — and show that the existence of an AG-function associated with a set guarantees robust local attractivity of that set for the dynamics.
We also propose a particular construction of AG-functions based on robust contractive invariant sets for the system, and we prove that the existence of a contractive set implies the existence of an AG-function, and hence robust local attractivity.
Our main results are visualised in Fig.~\ref{fig:diagram}, which illustrates their relationships and summarises their logical structure.
We demonstrate our theoretical results, in a practical application domain, on a switched system that models bacterial population dynamics in a host, affected by the immune response and antibiotic action, in the presence of antimicrobial resistance (AMR); numerical simulations reveal the presence of invariant contractive sets, and hence of attractive sets for the system dynamics. This result is relevant for assessing the fundamental limitations of antibiotic therapy protocols in preventing AMR, depending on the initial bacterial load.

\subsection{Notation}
We denote the sets $\mathbb{N}=\{0,1,2,\dots\}$ and $\mathbb{N}_q:=\{1,\dots,q\}$. Given $k\ge 1$ and a set $\Omega\subseteq\R^n$, we define $\Omega^k := \Omega\times \dots \times \Omega$, $k$ times, where $\times$ denotes the Cartesian product.
We define $d(x,y):=\|x-y\|=[(x-y)^\top(x-y)]^{1/2}$ the Euclidean distance between $x,y \in \R^n$;
the point-to-set distance between $x\in\mathbb{R}^n$ and a closed set $\Omega\subset\mathbb{R}^n$ is $d(x, \Omega) := \min_{y \in \Omega} \|x - y\|$.
Given the compact ambient set $\X\subseteq\R^n$, the open ball with center in $x\in\X$ and radius $\varepsilon>0$ relative to $\X$ is $\mathcal{B}_{\X}(x,\varepsilon):=\{y\in\X: d(x,y)<\varepsilon\}$, while the closed ball is $\bar{\mathcal{B}}_{\X}(x,\varepsilon):=\{y\in\X: d(x,y)\leq \varepsilon\}$.
Given $\Omega \subset \R^n$, the \emph{interior of $\Omega$ relative to $\X$} is the set $\inte_\X \left(\Omega\right) := \{x \in \Omega \cap \X \colon \exists \varepsilon > 0 \mbox{ such that } \mathcal{B}_{\X}(x,\varepsilon) \subseteq \Omega\}$.
Denote by $\text{cl}(\Omega)$ the closure of $\Omega \subset \R^n$ in $\R^n$. The \textit{boundary $\partial_\X \Omega$ of $\Omega$ relative to $\X$} consists of all points $x\in \text{cl}(\Omega) \cap \X$ such that, for all $\varepsilon$, $\mathcal{B}_{\X}(x,\varepsilon)$ contains at least one point in $\Omega$ and at least one point not in $\Omega$. Throughout the work, we consider interior and boundary of sets relative to $\X$.

\section{Model Class and Preliminaries}\label{sec:Preliminaries}
The dynamics of an uncertain  nonlinear switched system in discrete time can be described by the equation
\begin{align}\label{eq:SistOrig}
x(k+1)=f_{\sigma(k)}(x(k),w(k)),
\end{align}
where, at a given discrete time $k\ge 0$, $x(k)\in\X \subset \R^n$ is the state of the system, $w(k)\in\W \subset \R^n$ is the uncertainty, and $\sigma(k) \in \mathbb{N}_q$ is the active mode, selected according to the switching law $\sigma \colon \N\rightarrow \mathbb{N}_q$, a function that, at each time $k$, selects the active mode among $q>1$ possible modes.
The sets $\X$ and $\W$ are compact.
Function $f_{\sigma} \colon \X\times\W \to \X$ is jointly continuous in $\X\times\W$ for all $\sigma\in\mathbb{N}_q$.

For an initial state $x$, switching sequence (i.e., admissible sequence of active modes) $\sigma^k := \{\sigma(1),\dots,\sigma(k)\}\in\N_q^k$ and uncertainty sequence $w^k:=\{w(1),\dots,w(k)\}\in\W^k$ (i.e., $\sigma(i)\in\N_q$ and $w(i)\in\W$ for all $i=1,\dots,k$), with $k\geq 1$, we denote by $\phi(x,\sigma^k,w^k)$ the resulting state at time $k$, while $\boldsymbol{\Phi}(x, \sigma^k)$ denotes the set of all possible states reached at time $k$, starting from $x$, by applying the switching sequence $\sigma^k$, for all possible uncertainty sequences $w^k\in\W^k$; see the illustration in Fig.~\ref{fig:bundles}-A. The set $\boldsymbol{\Phi}(x,\sigma^k)=\{\phi(x,\sigma^k,w^k) \text{ for some } w^k\in\W^k\}$ does not depend on the specific uncertainty realisation, but only on the uncertainty set $\W$.
\begin{figure}[ht]
	\centering	\includegraphics[width=0.85\linewidth]{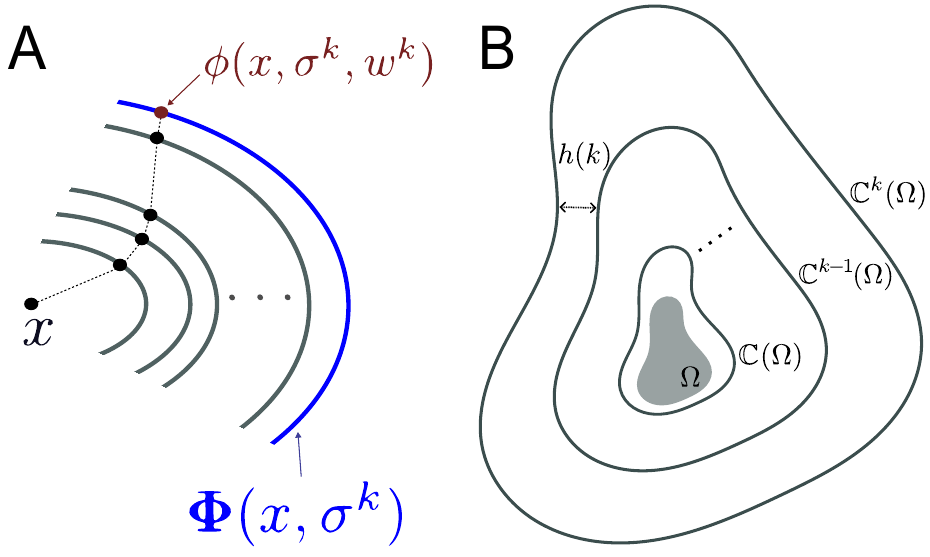}
	\caption{\small \textbf{A.} $\phi(x,\sigma^k,w^k)$ is the state reached at time $k$ from the initial state $x\in\X$, if the switching sequence $\sigma^k=\{\sigma(1),\dots,\sigma(k)\}$ is applied and the realised uncertainty sequence is $w^k=\{w(1),\dots,w(k)\}$, while $\boldsymbol{\Phi}(x, \sigma^k)$ is the set of all states that can be reached at time $k$ from $x$, if the switching sequence $\sigma^k$ is applied, for all possible uncertainty sequences in $\W^k$. \textbf{B.} The robust control contractive set $\Omega$ is contained in the interior of $\mathbb{C}(\Omega)$; for all $k \ge 0$, the controllable sets satisfy $\mathbb{C}^{k}(\Omega) \subseteq \inte_\X \big(\mathbb{C}^{k+1}(\Omega)\big)$, while function $h(k)$ quantifies the distance between the boundaries of $\mathbb{C}^{k-1}(\Omega)$ and of $\mathbb{C}^{k}(\Omega)$.}
	\label{fig:bundles}
\end{figure}

We provide preliminary definitions from the set-theoretic control literature (see, e.g., \citealp{Blanchinibook15}).

We wish to analyse whether a given \textit{robust control invariant set} is attractive under the system dynamics~\eqref{eq:SistOrig}.
\begin{definition}\textbf{(Robust control invariant set, RCIS.)}\label{def:rcis}
Set $\Omega\subset\X$ is a \textit{robust control invariant set} for system~\eqref{eq:SistOrig} if, for every state $x\in\Omega$, there is $\sigma^1\in\N_q$ such that $\boldsymbol{\Phi}(x,\sigma^1)\subseteq\Omega$.
\end{definition}

The $1$-step condition in Definition~\ref{def:rcis} implies, for all $k \ge 1$, that a switching sequence $\sigma^k \in \N_q^k$ can be recursively constructed such that $\boldsymbol{\Phi}(x,\sigma^k) \subseteq \Omega$; the inclusion can be thus guaranteed to hold for all future steps.
Henceforth, whenever we mention a RCIS, we implicitly assume that it is nonempty and compact.
A RCIS for the system is locally attractive if it has a nontrivial \textit{domain of attraction}.

\begin{definition}\textbf{(Domain of attraction.)}  \label{def:domain}
The \textit{domain of attraction} $\D(\Omega)$ of a RCIS $\Omega\subset\X$ for system~\eqref{eq:SistOrig} is given by all states $x\in\X$ starting from which the system trajectory asymptotically reaches $\Omega$ ($\lim_{k\to \infty} d(\phi(x,\sigma^{k},w^{k}),\Omega)=0$) for some switching sequence $\sigma^{k}\in\N_q^{k}$ and for all admissible uncertainty sequences $w^{k}\in\W^{k}$.
\end{definition}
Consider the \textit{controllable set} (also called \emph{backward reachable set}) of states from which the dynamics of system~\eqref{eq:SistOrig} can be driven into a given RCIS \citep{kerrigan2000invariant,chen2016estimation,chen2018reachable,baldi2018reachable,decardi2021computing,serry2023zonotopic}.

\begin{definition}\textbf{(Robust controllable set for mode $\sigma^1$.)} \label{def:1-scssigma}
Given a RCIS $\Omega \subset \X$ for system~\eqref{eq:SistOrig}, the \textit{robust controllable set} to $\Omega$ for mode $\sigma^1\in\N_q$ is
\begin{equation}
 \mathbb{C}(\sigma^1,\Omega) = \{ x \in \X \colon  \boldsymbol{\Phi}(x, \sigma^1) \subseteq \Omega \}.
\end{equation}
\end{definition}

\begin{definition}\textbf{(Robust $1$-step controllable set.)}\label{def:1-scs}
Given a RCIS $\Omega \subset \X$ for system~\eqref{eq:SistOrig}, the \textit{robust $1$-step controllable set} to $\Omega$ is
\begin{equation}
\mathbb{C}(\Omega) = \bigcup_{\sigma^1\in\N_q} \mathbb{C}(\sigma^1,\Omega),
\end{equation}
i.e., $\mathbb{C}(\Omega) = \{x \in \X \colon \boldsymbol{\Phi}(x, \sigma^1)\subseteq\Omega \text{ for some } \sigma^1\in\N_q\}$.
\end{definition}

\begin{remark}
To verify that a set $\Omega$ is a RCIS, it is sufficient to check that $\Omega \subseteq \mathbb{C}(\Omega)$.
\end{remark}

\begin{definition}\textbf{(Robust $k$-step controllable set.)} \label{def:k-scs}
Given a RCIS $\Omega \subset \X$ for system~\eqref{eq:SistOrig}, the \textit{robust $k$-step controllable set} to $\Omega$ for system~\eqref{eq:SistOrig} can be defined recursively by 
\[\mathbb{C}^k(\Omega) := \mathbb{C}(\mathbb{C}^{k-1}(\Omega)),\] 
for $k\ge 1$, with $\mathbb{C}^1(\Omega):= \mathbb{C}(\Omega)$ and $\mathbb{C}^0(\Omega):=\Omega$.
Equivalently, $\mathbb{C}^k(\Omega) = \{x \in \X \colon \boldsymbol{\Phi}(x, \sigma^{k})\subseteq\Omega \text{ for some } \sigma^k\in\N_q^k\}$.
\end{definition}

Given a RCIS $\Omega \subset \X$ for system~\eqref{eq:SistOrig}, we have
\begin{equation}\label{eq:D_Omega}
\mathbb{C}_\infty(\Omega):=\bigcup_{k=0}^\infty \mathbb{C}^k(\Omega)\subseteq \D(\Omega).
 \end{equation}

As a standing assumption throughout the work, we let $\Omega \subset \X$ be a closed (and hence compact) set.

\begin{remark}
Since $\Omega$ is closed, for each $k \geq 1$, the set $\mathbb{C}^{k}(\Omega)$ is a finite union of robust controllable sets, each associated with a mode. Since each of these sets is closed \citep{blanchini1994ultimate}, their finite union is also closed. In contrast, the set $\mathbb{C}_\infty(\Omega)$ is defined as an infinite union of closed sets, which does not necessarily preserve closedness. 
Hence, while every $\mathbb{C}^{k}(\Omega)$ is closed, $\mathbb{C}_\infty(\Omega)$ may fail to be closed.
\end{remark}

Robust controllable sets to a RCIS have a nested structure.
\begin{proposition}\label{proper:nestedC}
Given a RCIS $\Omega \subset \X$ for system~\eqref{eq:SistOrig},
\[ \mathbb{C}^k(\Omega) \subseteq \mathbb{C}^{k+1}(\Omega) \,\, \mbox{ for all } k\ge 0.\]
\end{proposition}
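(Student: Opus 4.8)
The plan is to prove the inclusion $\mathbb{C}^{k-1}(\Omega) \subseteq \mathbb{C}^{k}(\Omega)$ by induction on $k$, where the essential content is concentrated in the base case $k=1$: establishing $\Omega = \mathbb{C}^0(\Omega) \subseteq \mathbb{C}^1(\Omega) = \mathbb{C}(\Omega)$. This base case is exactly the Remark already stated in the excerpt — that $\Omega$ being a RCIS means $\Omega \subseteq \mathbb{C}(\Omega)$ — so I would first spell out why: for each $x \in \Omega$, by definition of RCIS there exists $\sigma \in \N_q$ with $\boldsymbol{\Phi}(x,\sigma) \subseteq \Omega$, which says precisely $x \in \mathbb{C}(\sigma,\Omega) \subseteq \mathbb{C}(\Omega)$. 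Hence $\mathbb{C}^0(\Omega) \subseteq \mathbb{C}^1(\Omega)$.

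Next, I would record the key monotonicity property of the one-step operator $\mathbb{C}(\cdot)$: if $A \subseteq B$ (with $A$ a RCIS, or more carefully, with both sets being ones for which $\mathbb{C}$ is defined), then $\mathbb{C}(A) \subseteq \mathbb{C}(B)$. This follows directly from Definitions~\ref{def:1-scssigma}–\ref{def:1-scs}: if $\boldsymbol{\Phi}(x,\sigma) \subseteq A$ and $A \subseteq B$, then $\boldsymbol{\Phi}(x,\sigma) \subseteq B$, so $\mathbb{C}(\sigma,A) \subseteq \mathbb{C}(\sigma,B)$ for every mode $\sigma$, and taking the union over $\sigma \in \N_q$ gives $\mathbb{C}(A) \subseteq \mathbb{C}(B)$. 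One subtlety I would flag: the definition of $\mathbb{C}(\Omega)$ formally presupposes $\Omega$ is a RCIS; to iterate the operator I should note that each $\mathbb{C}^{k}(\Omega)$ is itself a RCIS (it contains $\mathbb{C}^{k-1}(\Omega)$, which by the inductive hypothesis contains $\Omega$, and more to the point satisfies $\mathbb{C}^{k}(\Omega) \subseteq \mathbb{C}^{k+1}(\Omega)$ once the induction is set up properly), so the operator is legitimately applicable at every stage — alternatively, one simply reads $\mathbb{C}(\cdot)$ as a set operator defined on arbitrary closed subsets of $\X$, for which monotonicity holds unconditionally.

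For the inductive step, assume $\mathbb{C}^{k-1}(\Omega) \subseteq \mathbb{C}^{k}(\Omega)$ for some $k \ge 1$. Applying the monotonicity of $\mathbb{C}(\cdot)$ to this inclusion yields $\mathbb{C}(\mathbb{C}^{k-1}(\Omega)) \subseteq \mathbb{C}(\mathbb{C}^{k}(\Omega))$, i.e., $\mathbb{C}^{k}(\Omega) \subseteq \mathbb{C}^{k+1}(\Omega)$ by Definition~\ref{def:k-scs}. This closes the induction and establishes the claim for all $k \ge 1$.

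I do not expect a genuine obstacle here — the proposition is essentially a formal consequence of monotonicity of the controllable-set operator together with the defining property of a RCIS. The only point demanding a little care is the well-definedness of the iterated operator (ensuring $\mathbb{C}(\cdot)$ is being applied to sets for which it makes sense), which is handled either by observing inductively that each $\mathbb{C}^{k}(\Omega)$ contains $\Omega$ and is itself robust control invariant, or by treating $\mathbb{C}(\cdot)$ as an operator on all closed subsets of $\X$ from the outset.
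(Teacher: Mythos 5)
Your proof is correct and follows essentially the same route as the paper's: an induction on $k$ grounded in the defining property of a RCIS, namely $\Omega \subseteq \mathbb{C}(\Omega)$. The only difference is cosmetic — the paper verifies the first nontrivial inclusion $\mathbb{C}(\Omega) \subseteq \mathbb{C}^2(\Omega)$ by an element-wise argument and then appeals to induction, whereas you make explicit the monotonicity of the operator $\mathbb{C}(\cdot)$ that justifies the inductive step, which is a slightly more complete write-up of the same idea.
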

\begin{proof}
Consider $k=0$. Since $\Omega$ is a RCIS, by Definition~\ref{def:rcis}, for all $x\in\Omega$ there exists $\sigma^1 \in\N_q$ such that $\boldsymbol{\Phi}(x, \sigma^1)\subseteq\Omega$, and hence $x\in\mathbb{C}(\Omega)$ as per Definition~\ref{def:1-scs}. 
Consider $k\ge 1$. For all $x\in\mathbb{C}^{k}(\Omega)$, by Definition~\ref{def:k-scs} there exists a switching sequence $\sigma^{k}=\{\sigma(1), \dots, \sigma(k)\}$ such that $\boldsymbol{\Phi}(x, \sigma^{k})\subseteq\Omega$. Since $\Omega$ is a RCIS, there must exist $\sigma(k+1)\in\N_q$ such that $\boldsymbol{\Phi}(x, \sigma^{k+1})\subseteq\Omega$ for $\sigma^{k+1}:=\{\sigma(1), \cdots, \sigma(k),\sigma(k+1)\}$, which implies $x\in\mathbb{C}^{k+1}(\Omega)$.
\end{proof}

\section{AG-functions for set attractivity}\label{sec:TheLFunction}

We can associate an AG-function with a RCIS $\Omega$ for system~\eqref{eq:SistOrig}, as per the following definition.
\begin{definition}\textbf{(AG-function})\label{def:L-function}
Given a RCIS $\Omega\subset\X$ for system~\eqref{eq:SistOrig}, function $L \colon \X\to \R_{\ge 0}$ is an \textit{AG-function} for system~\eqref{eq:SistOrig} and set $\Omega$ if it satisfies the following properties:\\
\textbf{(i)} $L(x)=0$ for all $x\in\Omega$ and $L(x)>0$ for all $x\in \X \setminus\Omega$.\\
\textbf{(ii)} There exists $R>0$ such that $\{x \in \X \colon d(x,\Omega)= R\}$ is nonempty and, given $L_R^\ast := \inf_{d(x,\Omega)=R} L(x)$ and defined the level set $\mathcal{L}_R(\Omega) := \{x\in\X \colon L(x)\le L_R^\ast\}$, there exists a non-decreasing function $\eta_R \colon [0,R]\to\R_{\ge 0}$, with $\eta_R(0)=0$ and $\eta_R(r)>0$ for all $r \in (0,R]$, such that
\[
\min_{\sigma^1\in\mathbb{N}_q}\;\sup_{z\in \boldsymbol{\Phi}(x,\sigma^1)}  L(z)-L(x)
\;\le\; -\eta_R\!\left(d(x,\Omega)\right)
\]
for all $x\in \mathcal{L}_R(\Omega)$.
\end{definition}

\begin{remark}
The control-Lyapunov-like AG-function in Definition~\ref{def:L-function} may be non-continuous; discontinuities may arise, e.g., due to switching between distinct modes of system~\eqref{eq:SistOrig}. Still, function $L$ is monotonically non-increasing along the system trajectories emanating from $\mathcal{L}_R(\Omega)$, and monotonically decreasing when $x \in \mathcal{L}_R(\Omega) \setminus \Omega$, as we show next.
\end{remark}

\begin{proposition}\label{prop:L-function-implies-attractivity}
Given a RCIS $\Omega\subset\X$ for system~\eqref{eq:SistOrig}, if there exists an AG-function for system~\eqref{eq:SistOrig} and set $\Omega$, as per Definition~\ref{def:L-function}, then $\Omega$ is robustly locally attractive in $\mathcal{L}_R(\Omega)$.
\end{proposition}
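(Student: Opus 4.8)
The plan is to show that, starting from any $x_0 \in \mathcal{L}_R(\Omega)$, we can choose a switching law that drives the dynamics into $\Omega$ in the limit, while remaining inside $\mathcal{L}_R(\Omega)$ for all time. First I would establish \emph{forward invariance} of $\mathcal{L}_R(\Omega)$ under the greedy switching rule. Given $x \in \mathcal{L}_R(\Omega)$, pick $\sigma^\star(x) \in \arg\min_{\sigma\in\N_q} \sup_{z\in\boldsymbol{\Phi}(x,\sigma)} L(z)$. By property (ii), every $z \in \boldsymbol{\Phi}(x,\sigma^\star(x))$ satisfies $L(z) \le L(x) - \eta_R(d(x,\Omega)) \le L(x) \le L_R^\ast$, so $z \in \mathcal{L}_R(\Omega)$; hence the whole reachable bundle stays in $\mathcal{L}_R(\Omega)$, and the greedy rule is well-defined at every subsequent step. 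I would also need to check that points of $\mathcal{L}_R(\Omega)$ satisfy $d(x,\Omega) \le R$, so that $\eta_R(d(x,\Omega))$ is defined: if $d(x,\Omega) > R$ then by continuity of $d(\cdot,\Omega)$ along a path from $x$ to $\Omega$ there is a point at distance exactly $R$, and the definition of $L_R^\ast$ as an infimum over that sphere, together with property (i) forcing $L$ to be large away from $\Omega$, should rule this out — this is the one spot where a careful topological argument is needed, and I would lean on $\mathcal{L}_R(\Omega)$ being, by construction, a sublevel set trapped inside the sphere of radius $R$.

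Next I would run the \emph{Lyapunov-type decrease argument}. Fix $x_0 \in \mathcal{L}_R(\Omega)$ and let $\sigma^k$ be the greedy law; let $d_k := \sup_{z \in \boldsymbol{\Phi}(x_0,\sigma^k)} d(z,\Omega)$ and $V_k := \sup_{z \in \boldsymbol{\Phi}(x_0,\sigma^k)} L(z)$. From (ii), $V_{k+1} \le V_k - \eta_R(\delta_k)$ for an appropriate $\delta_k$ bounding the relevant distances along the bundle (taking a worst-case state in the bundle at step $k$ and applying the inequality there). Since $V_k \ge 0$ is non-increasing, it converges, so $\eta_R(\delta_k) \to 0$; because $\eta_R$ is non-decreasing with $\eta_R(r) > 0$ for $r > 0$, this forces $\delta_k \to 0$, i.e. the whole reachable bundle collapses onto $\Omega$ in distance. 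That gives $d(\phi(x_0,\sigma^k,w^k),\Omega) \to 0$ uniformly over uncertainty sequences, which is precisely robust local attractivity of $\Omega$ in $\mathcal{L}_R(\Omega)$.

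The main obstacle I anticipate is handling the \emph{bundle supremum} carefully: property (ii) gives a decrease of $\sup_{z}L(z)$ over the one-step reachable set from a single point $x$, but along a trajectory the state at step $k$ ranges over the whole set $\boldsymbol{\Phi}(x_0,\sigma^k)$, and I need the decrease to propagate to the supremum over that set. The clean way is to apply (ii) pointwise to \emph{every} $x \in \boldsymbol{\Phi}(x_0,\sigma^k)$ using the same mode $\sigma(k+1)$ — which requires the greedy mode to be chosen uniformly for the bundle, not per point; one should therefore define the greedy rule as $\sigma(k+1) \in \arg\min_\sigma \sup_{x \in \boldsymbol{\Phi}(x_0,\sigma^k)} \sup_{z \in \boldsymbol{\Phi}(x,\sigma)} L(z)$, and verify this still yields a decrease governed by $\eta_R$ evaluated at $\sup_{x \in \boldsymbol{\Phi}(x_0,\sigma^k)} d(x,\Omega)$. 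A secondary subtlety is that $\eta_R$ need not be continuous or strictly increasing, so the step "$\eta_R(\delta_k) \to 0 \Rightarrow \delta_k \to 0$" must be argued via monotonicity and positivity alone (if $\delta_k \not\to 0$ then $\delta_k \ge \beta > 0$ infinitely often, whence $\eta_R(\delta_k) \ge \eta_R(\beta) > 0$, contradicting convergence of $V_k$), rather than by inverting $\eta_R$.
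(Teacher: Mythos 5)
Your overall architecture (forward invariance of the level set under a greedy mode choice, a telescoping/monotone decrease of $L$, and then using monotonicity and positivity of $\eta_R$ to force the distance to zero) is the same as the paper's. However, there is a genuine gap in the second half: the bundle-level decrease $V_{k+1}\le V_k-\eta_R(\delta_k)$, with $V_k:=\sup_{z\in\boldsymbol{\Phi}(x_0,\sigma^k)}L(z)$, does not follow from property \textbf{(ii)}. That property controls $\min_{\sigma}\sup_{z\in\boldsymbol{\Phi}(x,\sigma)}L(z)$ for each \emph{fixed} $x$, and the minimising mode depends on $x$. For a set $B=\boldsymbol{\Phi}(x_0,\sigma^k)$ one only has
\[
\min_{\sigma\in\N_q}\ \sup_{x\in B}\ \sup_{z\in\boldsymbol{\Phi}(x,\sigma)}L(z)\ \ge\ \sup_{x\in B}\ \min_{\sigma\in\N_q}\ \sup_{z\in\boldsymbol{\Phi}(x,\sigma)}L(z),
\]
and the inequality can be strict: two points of the bundle may require different modes, and a single mode applied to the whole bundle may \emph{increase} $L$ at some of its points. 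So the "verification" you defer — that the bundle-greedy rule still yields a decrease governed by $\eta_R$ — is exactly the step that fails in general; an open-loop switching sequence that contracts the entire reachable bundle is a strictly stronger conclusion than the hypotheses support.

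The repair is to abandon the bundle supremum and argue per realised trajectory, which is what the paper does and what your own forward-invariance step already implicitly uses: for each admissible uncertainty sequence, choose the mode at step $k$ as a function of the realised state $\phi(x,\sigma^k,w^k)$ (a state-feedback switching law, hence one that may depend on the past uncertainties). Then property \textbf{(ii)} applies pointwise along the trajectory, the sum $\sum_{k}\eta_R\bigl(d(\phi(x,\sigma^k,w^k),\Omega)\bigr)$ is bounded by $L(x)$, its terms tend to zero, and your monotonicity-plus-positivity argument gives $d(\phi(x,\sigma^k,w^k),\Omega)\to 0$ for every admissible uncertainty sequence; this is the notion of robust local attractivity the paper establishes. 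Your side remark about whether $\mathcal{L}_R(\Omega)\subseteq\{x:d(x,\Omega)\le R\}$ is a legitimate point that the paper glosses over (it is implicitly required for $\eta_R(d(x,\Omega))$ to be defined on the level set), but it is peripheral to the main argument.
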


\begin{proof}
Consider a suitable $R>0$ and functions $\eta_R$ and $L$ as in Definition~\ref{def:L-function}.
The switching law can be always chosen so that $L$ is monotonically non-increasing along the system trajectories.
In fact, given a generic $k \geq 0$, for every admissible uncertainty sequence $w^{k+1}\in\W^{k+1}$, there exists an admissible switching sequence $\sigma^{k+1}\in\N_q^{k+1}$ such that, for the resulting trajectory of system~\eqref{eq:SistOrig}, 
with $\phi(x,\sigma^{k},w^{k})\in \mathcal{L}_R(\Omega)$,
property \textbf{(ii)} in Definition~\ref{def:L-function} yields
\begin{align}
L(\phi(x,\sigma^{k+1},w^{k+1})) - L(\phi(x,\sigma^{k},w^{k})) \nonumber\\
\le\ -\eta_R\!\bigl(d(\phi(x,\sigma^{k},w^{k}),\Omega)\bigr),
\label{eq:Ldec}
\end{align}
where $\phi(x,\sigma^{0},w^{0}):=x$.

Property \textbf{(ii)} in Definition~\ref{def:L-function} involves the supremum of $L(z)-L(x)$ over all $z\in \boldsymbol{\Phi}(x,\sigma^1)$.  
Therefore, possible discontinuities of $L$ do not invalidate the decrease condition~\eqref{eq:Ldec}.  
For any $x\in\mathcal{L}_R(\Omega)$, property \textbf{(ii)} implies that, $\forall w^1 \in \W$, there exists $\sigma^1 \in \mathbb{N}_q$
such that $L(\phi(x,\sigma^1,w^1))\le L(x)\le L_R^\ast$,
and thus $\phi(x,\sigma^1,w^1)\in\mathcal{L}_R(\Omega)$. Hence, $\mathcal{L}_R(\Omega)$ is a RCIS.
Consequently, given $x \in \mathcal{L}_R(\Omega)$, the decrease inequality~\eqref{eq:Ldec} is valid for all~$k \ge 0$.
Hence, the sequence $\{L(\phi(x,\sigma^{k},w^{k}))\}_{k \ge 0}$ is non-increasing within $\mathcal{L}_R(\Omega)$, and strictly decreasing within $\mathcal{L}_R(\Omega) \setminus \Omega$, and it is bounded below by~$0$.

Summing~\eqref{eq:Ldec} from $k=0$ to $N-1$,
and using the fact that $L$ takes non-negative values, yields
    \begin{align*}
    \sum_{k=0}^{N-1}\eta_R\!\big(d(\phi(x,\sigma^{k},w^{k}),\Omega)\big)\ &\le\ L(x)-L(\phi(x,\sigma^{N},w^{N})) \\ &\le\ L(x),
    \end{align*}
which implies that
\begin{equation}\label{eq:eta_R_proof}
\sum_{k=0}^{\infty}\eta_R(d(\phi(x,\sigma^{k},w^{k}),\Omega))<\infty,
\end{equation}
and hence 
$\lim_{k\to\infty} \eta_R\!\big(d(\phi(x,\sigma^{k},w^{k}),\Omega)\big) = 0$.
Since $\eta_R$ is non-decreasing and $\eta_R(r)>0$ for all $R \ge r>0$, while $\eta_R(0)=0$, this implies $\lim_{k\to\infty} d(\phi(x,\sigma^{k},w^{k}),\Omega)= 0$, as can be proven by contradiction.

In fact, suppose that $\eta_R\Big(d(\phi(x,\sigma^{k},w^{k}),\Omega)\Big) \to 0$, but the sequence $d(\phi(x,\sigma^{k},w^{k}),\Omega) \not\to 0$. Then, there must exist a subsequence such that $|d(\phi(x,\sigma^{k_j},w^{k_j}),\Omega)| > \alpha >0$ for all $j$. Then, since function $\eta_R$ is non-decreasing, we must also have $\eta_R\Big(d(\phi(x,\sigma^{k_j},w^{k_j}),\Omega)\Big) \geq \eta_R(\alpha)>0$ for all $j$, which however contradicts \eqref{eq:eta_R_proof}.  Note that continuity of $\eta_R$ is not required.

The above reasoning holds for all admissible uncertainty sequences and for all initial conditions $x$ in a neighbourhood of~$\Omega$.
Therefore, $\Omega$ is \emph{robustly locally attractive}, for all initial conditions in $\mathcal{L}_R(\Omega)$.
\end{proof}

\subsection{Construction of an AG-function}\label{sec:Construction_L_function}

Here, we provide a constructive method to generate AG-functions based on \textit{contractive sets} for the system.
\begin{definition}\textbf{(Robust control contractive set, RCCS.)}\label{def:RCCS}
Set $\Omega\subset\X$ is a \textit{robust control contractive set} for system~\eqref{eq:SistOrig} if, for every state $x\in\Omega$, there is $\sigma^1\in\N_q$ such that $\boldsymbol{\Phi}(x, \sigma^1)\subseteq \inte_\X\left(\Omega\right)$.
\end{definition}
\begin{remark}
The inclusion $\boldsymbol{\Phi}(x,\sigma^1)\subseteq \inte_\X(\Omega)$ in Definition~\ref{def:RCCS} differs from the classical condition $\boldsymbol{\Phi}(x,\sigma^1)\subseteq \lambda\Omega$, for some $\lambda\in(0,1)$, which requires the origin to lie in $\Omega$ (see, e.g., \citealp{Blanchinibook15}). This interior-based formulation was used by~\cite{doan2013hierarchical}.
\end{remark}

Of course, every RCCS is also a RCIS. Whenever we mention a RCCS, we implicitly assume that it is nonempty and compact.

\begin{remark}\label{rem:check_RCCS}
    Given any set $\Omega$ with $\Omega\subseteq \inte_\X\big(\mathbb C(\Omega) \big)$, the set $\mathbb C(\Omega)$ is a RCCS. In fact, if $x \in \mathbb C(\Omega)$, by Definition~\ref{def:1-scs}, there exists $\sigma^1 \in \N_q$ such that $\boldsymbol{\Phi}(x, \sigma^1) \subseteq \Omega \subseteq \inte_\X\big(\mathbb C(\Omega) \big)$, which corresponds to Definition~\ref{def:RCCS}.
\end{remark}

Robust controllable sets to a RCCS have a stronger nested structure that extends Proposition~\ref{proper:nestedC} (see Fig.~\ref{fig:bundles}-B).

\begin{proposition}\label{propo:cont_nesetedC}
If $\Omega$ is a RCCS of system~\eqref{eq:SistOrig}, then 
\begin{equation*}
\mathbb{C}^{k}(\Omega) \subset \inte_\X\left(\mathbb{C}^{k+1}(\Omega)\right)  \,\, \mbox{ for all } k\ge 0.
\end{equation*} 
\end{proposition}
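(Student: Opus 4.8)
The plan is to prove, by induction on $k$, the combined statement that for every $k\ge 1$ the set $\mathbb{C}^{k-1}(\Omega)$ is a RCCS and $\mathbb{C}^{k-1}(\Omega)\subseteq\inte\!\left(\mathbb{C}^{k}(\Omega)\right)$. The engine is a single topological lemma: \emph{if $S\subset\X$ is a RCCS, then $S\subseteq\inte(\mathbb{C}(S))$}. Granting the lemma, the induction closes immediately. The base case $k=1$ is the lemma applied to $S=\Omega=\mathbb{C}^{0}(\Omega)$, which is a RCCS by hypothesis, giving $\mathbb{C}^{0}(\Omega)\subseteq\inte(\mathbb{C}(\Omega))=\inte(\mathbb{C}^{1}(\Omega))$. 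For the inductive step, assume $\mathbb{C}^{k-1}(\Omega)$ is a RCCS and $\mathbb{C}^{k-1}(\Omega)\subseteq\inte(\mathbb{C}^{k}(\Omega))$. Since $\mathbb{C}^{k}(\Omega)=\mathbb{C}(\mathbb{C}^{k-1}(\Omega))$ by Definition~\ref{def:k-scs}, the inclusion $\mathbb{C}^{k-1}(\Omega)\subseteq\inte(\mathbb{C}(\mathbb{C}^{k-1}(\Omega)))$ lets us apply Remark~\ref{rem:check_RCCS} with the set $\mathbb{C}^{k-1}(\Omega)$ in place of $\Omega$, concluding that $\mathbb{C}^{k}(\Omega)$ is a RCCS. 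Then the lemma applied to the RCCS $\mathbb{C}^{k}(\Omega)$ yields $\mathbb{C}^{k}(\Omega)\subseteq\inte(\mathbb{C}(\mathbb{C}^{k}(\Omega)))=\inte(\mathbb{C}^{k+1}(\Omega))$, which completes the step (and also recovers the asserted inclusion for index $k+1$).

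So the real work is the lemma, and this is where continuity and compactness genuinely enter: Proposition~\ref{proper:nestedC} only needed the invariance property, but upgrading the inclusion to an \emph{interior} inclusion requires regularity of the dynamics. Fix $x\in S$. Since $S$ is a RCCS there is $\sigma\in\N_q$ with $\boldsymbol{\Phi}(x,\sigma)\subseteq\inte(S)$. Now $\boldsymbol{\Phi}(x,\sigma)=f_{\sigma}(\{x\}\times\W)$ is the continuous image of the compact set $\{x\}\times\W$, hence compact, and it lies in the relatively open set $\inte(S)$; therefore there exists $\delta>0$ such that the closed $\delta$-tube $\{y\in\X:\ d(y,\boldsymbol{\Phi}(x,\sigma))\le\delta\}$ is still contained in $\inte(S)$.

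Next I would transfer this room back to the state $x$ via continuity of $f_{\sigma}$. The map $f_{\sigma}$ is jointly continuous on the compact set $\X\times\W$, hence uniformly continuous, so there is $\rho>0$ such that $d(x,x')\le\rho$ implies $d(f_{\sigma}(x,w),f_{\sigma}(x',w))\le\delta$ for every $w\in\W$; equivalently, one may invoke the tube lemma on the open set $\{(x',w):\ f_{\sigma}(x',w)\in\inte(S)\}\supseteq\{x\}\times\W$. In either formulation, every $x'\in\mathcal{B}_{\rho}(x)$ satisfies $\boldsymbol{\Phi}(x',\sigma)\subseteq\{y:\ d(y,\boldsymbol{\Phi}(x,\sigma))\le\delta\}\subseteq\inte(S)\subseteq S$, so $x'\in\mathbb{C}(\sigma,S)\subseteq\mathbb{C}(S)$. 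Hence $\mathcal{B}_{\rho}(x)\subseteq\mathbb{C}(S)$, i.e.\ $x\in\inte(\mathbb{C}(S))$, and since $x\in S$ was arbitrary the lemma follows. The only point that must be stated with care is the passage ``a compact set inside a (relatively) open set admits a uniform tube,'' which relies on $\X$ being a metric space together with compactness of $\W$ (so that each $\boldsymbol{\Phi}(x,\sigma)$ is compact) and of $\X\times\W$ (so that $f_{\sigma}$ is uniformly continuous); everything else is bookkeeping with the definitions of $\mathbb{C}(\cdot)$ and of RCCS, the closedness of each $\mathbb{C}^{k}(\Omega)$, and Remark~\ref{rem:check_RCCS}.
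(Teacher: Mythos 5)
Your proposal is correct and follows essentially the same route as the paper: the core lemma (a RCCS $S$ satisfies $S\subseteq\inte(\mathbb{C}(S))$, proved via compactness of $\boldsymbol{\Phi}(x,\sigma)$, a uniform $\varepsilon$-tube inside $\inte(S)$, and uniform continuity of $f_\sigma$ to pull back to a ball $\mathcal{B}_\delta(x)\subseteq\mathbb{C}(S)$) is exactly the paper's argument for the base case. You are merely more explicit about the induction bookkeeping (carrying the RCCS property of $\mathbb{C}^{k}(\Omega)$ through the step via Remark~\ref{rem:check_RCCS}), which the paper compresses into ``the result follows by induction.''
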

\begin{proof}
We prove that $\Omega \subset \inte_\X\left(\mathbb{C}(\Omega)\right)$, which means that $\mathbb{C}(\Omega)$ is a RCCS; then, the result follows recursively (by applying the same reasoning with $\mathbb{C}(\Omega)$ instead of $\Omega$, and so on, in view of the recursive definition of $\mathbb{C}^{k}(\Omega)$).
Given $x \in \Omega$, by definition of RCCS there exists $\sigma^1 \in \mathbb{N}_q$ such that $\boldsymbol{\Phi}(x, \sigma^1)\subseteq \inte_\X\left(\Omega\right)$.
Since $\W$ is compact and $f_{\sigma^1}(x,\cdot)$ is continuous, then $\boldsymbol{\Phi}(x,\sigma^1)$ is compact, and hence there exists $\varepsilon > 0$ such that
\[
\bigcup_{z \in \boldsymbol{\Phi}(x, \sigma^1)} \bar{\mathcal{B}}_{\X}(z,\varepsilon) \subseteq \Omega,
\]
for the closed ball $\bar{\mathcal{B}}_{\X}(z,\varepsilon)$.
By joint continuity of $f_{\sigma^1}$ and compactness of $\W$, $f_{\sigma^1}$ is uniformly continuous on the compact set $\bar{\mathcal{B}}_{\X}(x,r)\times\W$ for some small $r>0$, in view of the Heine-Cantor theorem. Hence, for the considered $\varepsilon$, there exists $\delta\in(0,r)$ such that, for all $w\in\W$,
\[ \| y-x\|<\delta ~\Rightarrow \|f_{\sigma^1}(y,w) - f_{\sigma^1}(x,w) \|<\varepsilon, \]
and therefore, for all $y \in \bar{\mathcal{B}}_{\X}(x,\delta)$, it holds that
\begin{equation*}
\boldsymbol{\Phi}(y, \sigma^1) \subseteq \bigcup_{z \in \boldsymbol{\Phi}(x, \sigma^1)} \bar{\mathcal{B}}_{\X}(z,\varepsilon)\subseteq \Omega,
\end{equation*}
which implies that $\bar{\mathcal{B}}_{\X}(x,\delta) \subset \mathbb{C}(\Omega)$ and, hence, that $x \in \inte_\X\left(\mathbb{C}(\Omega)\right)$, which concludes the proof.
\end{proof}

We define the following function, illustrated in Fig.~\ref{fig:bundles}-B.
\begin{definition}\textbf{($h-$function.)}\label{def:h_function}
Given a RCIS $\Omega\subset\X$ for system~\eqref{eq:SistOrig}, the associated $h-$function $h \colon \N \to \R_{\ge 0}$ is $h(0):=0$ and, for $k\ge 1$,
\begin{equation}\label{eq:hx}
h(k):=\min\{d(y,z) \colon y \in \partial_\X \mathbb{C}^{k-1}(\Omega), z \in \partial_\X \mathbb{C}^{k}(\Omega)\}.
\end{equation}
\end{definition}

If $\Omega$ is a RCCS, then $h(k)>0$ for all $k\ge 1$, since the robust controllable sets are strictly nested as per Proposition~\ref{propo:cont_nesetedC}. We can thus prove the following.

\begin{proposition}\label{propo:decreasingd}
If $\Omega$ is a closed RCCS and $h$ is the associated $h$-function as per Definition~\ref{def:h_function}, then, for all $k \ge 0$,
\begin{equation}\label{eq:decreasingd}
 \min_{y \in \partial_\X \mathbb{C}^{k}(\Omega)} d(y, \Omega) \leq \min_{y \in \partial_\X \mathbb{C}^{k+1}(\Omega)} d(y, \Omega) -h(k+1).
 \end{equation}
\end{proposition}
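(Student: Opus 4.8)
The plan is to relate the quantity $\min_{y\in\partial\mathbb{C}^k(\Omega)}d(y,\Omega)$ to a telescoping sum of the $h$-values and then isolate the last term. First I would show that for each $k\ge 1$ the minimal distance from $\partial\mathbb{C}^{k}(\Omega)$ to $\Omega$ is at least the sum $\sum_{j=1}^k h(j)$, or, more directly, prove the one-step bound
\[
\min_{y\in\partial\mathbb{C}^{k+1}(\Omega)} d(y,\Omega)\ \ge\ \min_{y\in\partial\mathbb{C}^{k}(\Omega)} d(y,\Omega) + h(k+1),
\]
which is exactly~\eqref{eq:decreasingd} after rearranging. To get this, fix a minimiser $y^\ast\in\partial\mathbb{C}^{k+1}(\Omega)$ of $d(\cdot,\Omega)$ and a closest point $z^\ast\in\Omega$, so $d(y^\ast,\Omega)=\|y^\ast-z^\ast\|$. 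The segment $[z^\ast,y^\ast]$ starts inside $\mathbb{C}^{k}(\Omega)$ (indeed inside $\Omega\subseteq\mathbb{C}^k(\Omega)$) and ends on $\partial\mathbb{C}^{k+1}(\Omega)$, which by Proposition~\ref{propo:cont_nesetedC} lies outside $\mathbb{C}^{k}(\Omega)$ (since $\mathbb{C}^k(\Omega)\subset\inte(\mathbb{C}^{k+1}(\Omega))$, the boundary of $\mathbb{C}^{k+1}(\Omega)$ is disjoint from $\mathbb{C}^k(\Omega)$). Hence the segment must cross $\partial\mathbb{C}^{k}(\Omega)$ at some point $p$, by connectedness of the segment and the fact that it meets both $\mathbb{C}^k(\Omega)$ and its complement.

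With such a crossing point $p\in\partial\mathbb{C}^{k}(\Omega)$ on the segment $[z^\ast,y^\ast]$, I would split the distance additively: $\|y^\ast-z^\ast\| = \|y^\ast-p\| + \|p-z^\ast\|$, which holds because $p$ lies on the segment. Now $\|p-z^\ast\|\ge d(p,\Omega)\ge \min_{y\in\partial\mathbb{C}^{k}(\Omega)}d(y,\Omega)$ since $p\in\partial\mathbb{C}^{k}(\Omega)$ and $z^\ast\in\Omega$; and $\|y^\ast-p\|\ge h(k+1)$ because $y^\ast\in\partial\mathbb{C}^{k+1}(\Omega)$, $p\in\partial\mathbb{C}^{k}(\Omega)$, and $h(k+1)$ is by Definition~\ref{def:h_function} the minimum distance between these two boundaries. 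Combining,
\[
\min_{y\in\partial\mathbb{C}^{k+1}(\Omega)}d(y,\Omega) = \|y^\ast-z^\ast\| \ge \min_{y\in\partial\mathbb{C}^{k}(\Omega)}d(y,\Omega) + h(k+1),
\]
which is the claim.

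The main obstacle is the existence of the crossing point $p$ with the additive distance decomposition. The additivity $\|y^\ast-z^\ast\|=\|y^\ast-p\|+\|p-z^\ast\|$ is automatic once $p$ is on the segment, so the real content is ensuring the segment $[z^\ast,y^\ast]$ genuinely passes through $\partial\mathbb{C}^{k}(\Omega)$ rather than, say, exiting through a portion of $\partial\X$. This is where I would lean on Proposition~\ref{propo:cont_nesetedC} and on closedness of $\mathbb{C}^{k}(\Omega)$: define $t^\ast=\sup\{t\in[0,1]:\, (1-t)z^\ast+t y^\ast\in\mathbb{C}^{k}(\Omega)\}$; this supremum is attained (closedness), is strictly less than $1$ (since $y^\ast\notin\mathbb{C}^{k}(\Omega)$, as $\partial\mathbb{C}^{k+1}(\Omega)\cap\mathbb{C}^{k}(\Omega)=\varnothing$ by the strict nesting), and the corresponding point $p$ lies in $\mathbb{C}^{k}(\Omega)$ but is a limit of points outside it, hence $p\in\partial\mathbb{C}^{k}(\Omega)$. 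One subtlety to address in the write-up is that all these sets and balls are taken relative to $\X$, so "outside $\mathbb{C}^k(\Omega)$'' means in $\X\setminus\mathbb{C}^k(\Omega)$, and the segment $[z^\ast,y^\ast]$ need not lie in $\X$ if $\X$ is nonconvex; if that is a concern one replaces the straight segment by a geodesic-type path in $\X$ connecting $z^\ast$ to $y^\ast$ and uses that $d(y^\ast,\Omega)$ is realised along $\X$, but in the intended setting $\X$ is taken convex (or the relevant sublevel sets are), so the segment argument suffices.
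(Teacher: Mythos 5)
Your proposal is correct and follows essentially the same route as the paper's proof: pick the minimising pair $(z^\ast,y^\ast)$ with $y^\ast\in\partial\mathbb{C}^{k+1}(\Omega)$, insert an intermediate point of $\partial\mathbb{C}^{k}(\Omega)$ on the connecting segment so that the distance decomposes additively, and bound the two pieces by $\min_{y\in\partial\mathbb{C}^{k}(\Omega)}d(y,\Omega)$ and $h(k+1)$ respectively. The only differences are cosmetic: the paper proves the case $k=1$ and invokes induction where you argue for general $k$ directly, and your explicit construction of the crossing point (via the supremum of $t$ with the segment point in the closed set $\mathbb{C}^{k}(\Omega)$) together with the remark on possible nonconvexity of $\X$ supplies details the paper leaves implicit.
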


\begin{proof}    
Recall that $\Omega$ and $\mathbb{C}^{k}(\Omega)$, for all $k$, are compact sets. When $k=0$, $\mathbb{C}^{k}(\Omega) = \Omega$, $\mathbb{C}^{k+1}(\Omega)=\mathbb{C}(\Omega)$ and $h(k+1)=h(1)$. Hence, by the definition of $h$ in \eqref{eq:hx}, the inequality in \eqref{eq:decreasingd} becomes
\begin{equation*}
0 \leq \min_{y \in \partial_\X \mathbb{C}(\Omega)} d(y, \Omega) - \min\{d(y,z) \colon y \in \partial_\X \Omega, z \in \partial_\X \mathbb{C}(\Omega)\},
\end{equation*}
which holds, because $\min\{d(y,z) \colon y \in \partial_\X \Omega, z \in \partial_\X \mathbb{C}(\Omega)\}=\min_{y \in \partial_\X \mathbb{C}(\Omega)} d(y, \Omega)$. Consider now a generic $k \geq 1$.
Since $\Omega$ and $\mathbb{C}^{k+1}(\Omega)$ are compact sets, there exist $\bar x\in\Omega$ and $\bar z\in\partial_\X\mathbb{C}^{k+1}(\Omega)$ such that $d(\bar x,\bar z) = \min_{y \in \partial_\X \mathbb{C}^{k+1}(\Omega)} d(y, \Omega)$. Since $\Omega\subset\inte_\X\left(\mathbb{C}^k(\Omega)\right)\subset \inte_\X\left(\mathbb{C}^{k+1}(\Omega)\right)$ and $\mathbb{C}^k(\Omega)$ is also a compact set, there exists $\bar y\in\partial_\X \mathbb{C}^k(\Omega)$ such that $d(\bar x,\bar z) = d(\bar x,\bar y) + d(\bar y,\bar z)$. Then,
    \begin{align*}
        \min_{y \in \partial_\X \mathbb{C}^{k}(\Omega)} d(y, \Omega) &\leq d(\bar x,\bar y)= \min_{y \in \partial_\X \mathbb{C}^{k+1}(\Omega)} d(y, \Omega) - d(\bar y,\bar z)\\
        &\leq \min_{y \in \partial_\X \mathbb{C}^{k+1}(\Omega)} d(y, \Omega) - h({k+1}),
    \end{align*}
for the $h$-function defined in~\eqref{eq:hx}.
\end{proof}

We define the following function on the set $\mathbb{C}_\infty(\Omega)$ in \eqref{eq:D_Omega}.

\begin{definition}\textbf{($\kappa-$function)}\label{def:kappa_function}
Given a RCIS $\Omega\subset\X$ for system~\eqref{eq:SistOrig}, the associated $\kappa$-function $\kappa \colon \mathbb{C}_\infty(\Omega)\to \N$ is
\[\kappa(x) = \min \{ k\in\N \colon x \in \mathbb{C}^k(\Omega) \}, \]
with $\kappa(x)=0$ for $x\in\Omega$.
\end{definition}

Next, we propose a constructive method to obtain an AG-function associated with a compact RCCS. 
Note that an AG-function is defined in the whole domain $\X$, but is guaranteed to be strictly decreasing along the system trajectories only in (a subset of) $\mathbb{C}_\infty(\Omega)\setminus\Omega$, associated with some $R>0$,
which is guaranteed to exist if $\Omega$ is a RCCS, since $\Omega\subset\inte_\X\big(\mathbb{C}(\Omega)\big)$,
and hence $\mathbb{C}_\infty(\Omega)\setminus\Omega\not=\emptyset$.

\begin{proposition}\label{prop:pseudoLyapunov}
Let $\Omega\subset \X$ be a RCCS for system~\eqref{eq:SistOrig}. The set-based map $L \colon \X\to\mathbb{R}_{\ge 0}$ defined for $x \in \mathbb{C}_\infty(\Omega)$ as
\begin{equation}\label{eq:L-Omega}
 L(x) = \min_{y \in \partial_\X \mathbb{C}^{\kappa(x)}(\Omega)} d(y, \Omega),
 \end{equation}
where $\kappa(x)$ is the $\kappa$-function associated with $\Omega$ as per Definition~\ref{def:kappa_function}, and for $x \in \X \setminus \mathbb{C}_\infty(\Omega)$ as $L(x)= \bar L$, for a positive $\bar L \ge \sup_{y \in \mathbb{C}_\infty(\Omega)} L(y)$, is an AG-function for system~\eqref{eq:SistOrig} and set $\Omega$ according to Definition~\ref{def:L-function}.
\end{proposition}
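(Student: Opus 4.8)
The plan is to verify the two defining properties of an $L$-function for the map $L$ in~\eqref{eq:L-Omega}. Property~\textbf{(i)} is immediate: for $x\in\Omega$ we have $\kappa(x)=0$, so $L(x)=\min_{y\in\partial\mathbb C^0(\Omega)}d(y,\Omega)=\min_{y\in\partial\Omega}d(y,\Omega)=0$; and for $x\in\X\setminus\Omega$ with $x\in\mathbb D(\Omega)$ we have $\kappa(x)\ge 1$, and since $\Omega\subset\inte(\mathbb C^{\kappa(x)}(\Omega))$ by Proposition~\ref{propo:cont_nesetedC}, the boundary $\partial\mathbb C^{\kappa(x)}(\Omega)$ is disjoint from $\Omega$, so $L(x)>0$ by compactness. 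For $x\in\X\setminus\mathbb D(\Omega)$ the value is $\bar L>0$ since already $L$ is positive somewhere on $\mathbb D(\Omega)\setminus\Omega$, which is nonempty because $\Omega$ is a RCCS.

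For property~\textbf{(ii)}, I would first choose $R>0$. A natural choice is $R := L_1^\ast := \min_{y\in\partial\mathbb C(\Omega)}d(y,\Omega) = L(x)$ for any $x\in\partial\mathbb C(\Omega)$; one checks $\{x\in\X:d(x,\Omega)=R\}$ is nonempty (it meets $\partial\mathbb C(\Omega)$, or one argues via connectedness/compactness that distances up to $R$ are attained) and that the corresponding level set $\mathcal L_R(\Omega)$ equals $\mathbb C(\Omega)$, since $L$ is constant $=L_k^\ast:=\min_{y\in\partial\mathbb C^k(\Omega)}d(y,\Omega)$ on the shell $\mathbb C^k(\Omega)\setminus\mathbb C^{k-1}(\Omega)$ and $L_1^\ast\le L_k^\ast$ for $k\ge 2$ by Proposition~\ref{propo:decreasingd}. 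Then, for $x\in\mathcal L_R(\Omega)=\mathbb C(\Omega)$ with $\kappa(x)=k\in\{0,1\}$, I must bound $\min_{\sigma}\sup_{z\in\boldsymbol\Phi(x,\sigma)}L(z)-L(x)$. If $k=0$, $x\in\Omega$, and by the RCCS property there is $\sigma$ with $\boldsymbol\Phi(x,\sigma)\subseteq\inte(\Omega)$, so $L(z)=0=L(x)$ for all such $z$ and the left side is $0\le -\eta_R(0)=0$. If $k=1$, $x\in\mathbb C(\Omega)$, so there is $\sigma$ with $\boldsymbol\Phi(x,\sigma)\subseteq\Omega$, whence $L(z)=0$ for all $z\in\boldsymbol\Phi(x,\sigma)$, so the left side is $-L(x)=-L_1^\ast=-R$. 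Thus in all cases the left side is $\le -L(x)$, and since on $\mathcal L_R(\Omega)$ we have $L(x)\in\{0,R\}$, we need $\eta_R$ with $\eta_R(d(x,\Omega))\le L(x)$.

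The remaining task is to construct the non-decreasing $\eta_R\colon[0,R]\to\R_{\ge 0}$ with $\eta_R(0)=0$, $\eta_R(r)>0$ on $(0,R]$, and $\eta_R(d(x,\Omega))\le L(x)$ for all $x\in\mathcal L_R(\Omega)$. The point is that on $\mathcal L_R(\Omega)\setminus\Omega=\mathbb C(\Omega)\setminus\Omega$ we have $L(x)=R$ (constant), while $d(x,\Omega)$ ranges over $(0,\rho]$ where $\rho:=\max_{x\in\mathbb C(\Omega)}d(x,\Omega)\ge R$; hence it suffices to take any non-decreasing function on $[0,R]$ that is $0$ at $0$, positive on $(0,R]$, and bounded above by $R$ — for instance $\eta_R(r):=r$ clamped, i.e.\ $\eta_R(r) = \min\{r,R\}\cdot$(something)$\,$, but most simply $\eta_R(r):= \tfrac{R}{R}\, r = r$ works since for $x$ with $d(x,\Omega)=r\le R$ and $x\notin\Omega$ we have $L(x)=R\ge r=\eta_R(r)$, and for $x\in\Omega$, $r=0=\eta_R(0)$. (One must double-check that $\mathcal L_R(\Omega)$ contains no point with $d(x,\Omega)\in(0,R]$ lying outside $\mathbb C(\Omega)$; this holds by the definition of $\mathcal L_R(\Omega)$ and $R=L_1^\ast$.)

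The main obstacle I anticipate is the bookkeeping around $R$ and $\mathcal L_R(\Omega)$: one must verify carefully that the level set $\{L\le L_1^\ast\}$ is exactly $\mathbb C(\Omega)$ (using that $L$ is piecewise constant on the nested shells and that $L_k^\ast$ is strictly increasing in $k$, which is Proposition~\ref{propo:decreasingd}), that $\{x:d(x,\Omega)=R\}\ne\emptyset$, and that no point outside $\mathbb C(\Omega)$ sneaks into $\mathcal L_R(\Omega)$ — together with confirming that for every $x\in\mathbb C(\Omega)$ the reachable set under the controlling $\sigma$ lands in $\Omega$ (for $\kappa(x)=1$) or in $\inte\Omega$ (for $\kappa(x)=0$), so that $\sup_{z}L(z)=0$ and the decrease is genuine. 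The continuity/closedness facts needed (each $\mathbb C^k(\Omega)$ closed, distances attained) are all quoted from earlier in the paper, so no new analytic work is required there.
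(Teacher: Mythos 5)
Your route is genuinely different from the paper's. The paper builds $\eta_R$ from the boundary-to-boundary gaps $h(k)$ between consecutive controllable sets, using a finite-subcover/compactness argument over the ring sets $\{x\in\X : r\le d(x,\Omega)\le R\}$, so that the decrease condition is certified across many nested shells $\mathbb{C}^k(\Omega)$ contained in a distance-neighbourhood of $\Omega$. You instead pick $R=L_1^\ast:=\min_{y\in\partial\mathbb{C}(\Omega)}d(y,\Omega)$ so that the level set collapses to $\mathbb{C}(\Omega)$, where the decrease is immediate because one admissible step lands in $\Omega$ and $L$ drops to $0$. Since Definition~\ref{def:L-function}(ii) only asks for the existence of \emph{some} $R$, this is a legitimate simplification of the literal statement; the price is that the region certified by Proposition~\ref{prop:L-function-implies-attractivity} shrinks to $\mathbb{C}(\Omega)$ rather than a (possibly much larger) union of shells. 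Property (i), the identification $L\equiv L_k^\ast$ on $\mathbb{C}^{k}(\Omega)\setminus\mathbb{C}^{k-1}(\Omega)$, and the strict ordering $L_k^\ast<L_{k+1}^\ast$ via Proposition~\ref{propo:decreasingd} are all handled correctly.

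There is, however, a concrete gap in your verification of property (ii). With your choice, $\mathcal{L}_R(\Omega)=\mathbb{C}(\Omega)$, and $\mathbb{C}(\Omega)$ in general contains points $x$ with $d(x,\Omega)>R$: your $R$ is the \emph{minimum} distance from $\partial\mathbb{C}(\Omega)$ to $\Omega$, not the maximum over $\mathbb{C}(\Omega)$ (picture a small disc $\Omega$ inside an elongated $\mathbb{C}(\Omega)$). For such $x$ the required inequality reads $\min_\sigma\sup_z L(z)-L(x)\le-\eta_R(d(x,\Omega))$ with $d(x,\Omega)\notin[0,R]$, so the right-hand side is undefined for a function $\eta_R\colon[0,R]\to\R_{\ge 0}$, and the condition ``for all $x\in\mathcal{L}_R(\Omega)$'' is not verified. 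Your parenthetical ``double-check'' guards against the wrong failure mode (points \emph{outside} $\mathbb{C}(\Omega)$ entering the level set); the actual problem is points \emph{inside} $\mathbb{C}(\Omega)$ that are too far from $\Omega$. The paper sidesteps this by first fixing $R$ as a radius with $\{x\in\X : d(x,\Omega)\le R\}\subseteq\mathbb{D}(\Omega)$ and then restricting to the largest level set contained in that neighbourhood, which guarantees $d(x,\Omega)\le R$ wherever $\eta_R$ is evaluated. Your argument is repairable in the same spirit (restrict the verification to $\mathcal{L}_R(\Omega)\cap\{d(x,\Omega)\le R\}$, or observe that the left-hand side equals $-R\le-\eta_R(r)$ for every $r\in[0,R]$ and argue the definition only needs checking there), but as written this step is incomplete.
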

\begin{proof}
We prove that the function $L$ defined in \eqref{eq:L-Omega} satisfies both properties in Definition~\ref{def:L-function}.

Property \textbf{(i)}. For $x\in\Omega$, we have $\kappa(x)=0$.  Since $\Omega$ is closed, $\partial_\X\Omega\subseteq\Omega$, which implies that $d(z,\Omega)=0$ for all $z\in \partial_\X\mathbb{C}^0(\Omega) = \partial_\X\Omega$, and hence $L(x)=0$. 
Conversely, $\kappa(x)\ge 1$ for all $x\in\mathbb{C}_\infty(\Omega)\setminus\Omega$, and hence 
    \[\min_{y \in \partial_\X \mathbb{C}^{\kappa(x)}(\Omega)} d(y, \Omega)>0,\]
    since $\Omega\subset \inte_\X(\mathbb{C}^{\kappa(x)}(\Omega))$ for all $\kappa(x)\ge 1$ (see Proposition~\ref{propo:cont_nesetedC}); then, $L(x)>0$ when $x\in\mathbb{C}_\infty(\Omega)\setminus\Omega$. Finally, by definition, when $x \in \X \setminus \mathbb{C}_\infty(\Omega)$, $L(x)= \bar L > 0$.
    
Property \textbf{(ii)}. Consider $x \in \mathbb{C}_\infty(\Omega)$. Given the $\kappa$-function $\kappa(x)$ associated with $\Omega$ as per Definition~\ref{def:kappa_function}, we have that $x\in\mathbb{C}^{\kappa(x)}(\Omega)$. By the definition of robust controllable sets and the nested structure, there is $\sigma_x\in\N_q$ such that
    \begin{equation*}
    \boldsymbol{\Phi}(x, \sigma_x)\subset \mathbb{C}^{\kappa(x)-1}(\Omega).
    \end{equation*}
    Then, for all $z\in\boldsymbol{\Phi}(x, \sigma_x)$, we have that $\kappa(z) \leq \kappa(x)-1$, and hence
    \begin{equation*}
    \min_{y\in\partial_\X \mathbb{C}^{\kappa(z)}(\Omega)}d(y,\Omega) \le \min_{y\in\partial_\X \mathbb{C}^{\kappa(x)-1}(\Omega)}d(y,\Omega).
    \end{equation*}
    Since the last inequality holds pointwise in $z$, then
    \begin{equation}\label{eq:theorem1}
    \sup_{z\in\boldsymbol{\Phi}(x, \sigma_x)} \min_{y\in \partial_\X\mathbb{C}^{\kappa(z)}(\Omega)} d(y,\Omega) \le \min_{y\in \partial_\X \mathbb{C}^{\kappa(x)-1}(\Omega)}d(y,\Omega).
    \end{equation}
    Now consider that:
    \begin{align}
    \min_{\sigma^1\in\mathbb{N}_q} \sup_{z\in\boldsymbol{\Phi}(x, \sigma^1)} L(z)&\le \sup_{z\in\boldsymbol{\Phi}(x, \sigma_x)} L(z) \label{eq:theorem3}\\
    &= \sup_{z\in\boldsymbol{\Phi}(x, \sigma_x)} \left( \min_{y\in \partial_\X \mathbb{C}^{\kappa(z)} (\Omega)}d(y,\Omega)\right)\label{eq:theorem4} \\
    &\le \min_{y\in \partial_\X \mathbb{C}^{\kappa(x)-1}(\Omega)}d(y,\Omega) \label{eq:theorem5}\\
    &\le \min_{y\in \partial_\X \mathbb{C}^{\kappa(x)}(\Omega)}d(y,\Omega) - h(\kappa(x))\label{eq:theorem6}\\
    &= L(x) - \mathbf{h}(x),\label{eq:theorem7}
    \end{align}
    where $\mathbf{h}(x) := h(\kappa(x))$, and $h(\kappa(x))$ is given by \eqref{eq:hx}. Note that inequality \eqref{eq:theorem5}  follows from~\eqref{eq:theorem1}, while inequality \eqref{eq:theorem6} follows from~\eqref{eq:decreasingd} in Proposition~\ref{propo:decreasingd}.

    Since $\Omega$ is a RCCS, $\mathbb{C}_\infty(\Omega) \setminus\Omega\not=\emptyset$, and there exists $R>0$ such that $N_R(\Omega):=\{x \in \X \colon d(x,\Omega)\le R\} \subseteq \mathbb{C}_\infty(\Omega)$. Now, for any $r$ with $0\le r\le R$, we define the ``ring'' set $N_{r,R}(\Omega):=\{x \in \X \colon r\le d(x,\Omega)\le R\}$. The family $ \{\inte_\X\left(\mathbb{C}^i(\Omega)\right)\}_{i\ge 1}$ is an open cover of $N_{r,R}(\Omega)$. Since $N_{r,R}(\Omega)$ is compact, there is a finite subcover 
    \[ N_{r,R}(\Omega) \subset \bigcup_{j=1}^m \inte_\X\left(\mathbb{C}^{i_j}(\Omega)\right)\]
    and
    \[\min_{x\in N_{r,R}(\Omega)} \mathbf{h}(x)=\min_{i\in S_m} h(i)>0,\] 
    for the index set $S_m = \{i_1,\dots,i_m\}$.

    Define $\eta_R(0)=0$ and, for each $r\in(0,R]$,
    \[
    \eta_R(r)\;:=\;\min_{x\in N_{r,R}(\Omega)} \mathbf{h}(x).
    \]
    Since $N_{r_2,R}(\Omega)\subseteq N_{r_1,R}(\Omega)$ whenever $0\le r_1\le r_2\le R$, function $\eta_R(r)$ is non-decreasing. Moreover, any $x\in N_{R}(\Omega)$ also satisfies $x\in N_{d(x,\Omega),R}(\Omega)$, and hence
    \[
    \mathbf{h}(x)\ \ge\ \eta_R\big(d(x,\Omega)\big).
    \]
    Combining the above inequality with \eqref{eq:theorem3}–\eqref{eq:theorem7} yields
    \[
    \min_{\sigma^1\in\mathbb{N}_q}\ \sup_{z\in \boldsymbol{\Phi}(x,\sigma^1)}L(z)-L(x)
    \le -\eta_R\!\big(d(x,\Omega)\big).
    \]
    The inequality also holds for all $x \in \mathcal{L}_{\hat R}$, where $\mathcal{L}_{\hat R}$ is the largest level set of $L$ such that $\mathcal{L}_{\hat R} \subseteq N_R(\Omega)$, which concludes the proof.
\end{proof}

The AG-function proposed in Proposition~\ref{prop:pseudoLyapunov} is piecewise constant, as it is constant over each set $\mathbb{C}^k(\Omega)\setminus\mathbb{C}^{k-1}(\Omega)$.

\begin{theorem}\label{theo:RCCS_AG}
If $\Omega$ is a RCCS for system~\eqref{eq:SistOrig}, then $\Omega$ is robustly locally attractive under the system dynamics.
Moreover, an explicit stabilizing switching law is 
\begin{equation}\label{eq:switchinglaw}
\hspace{-1mm} \sigma(x)\ = \begin{cases}
\min\big\{\sigma^1 \in\mathbb{N}_q \colon x\in \mathbb{C}\big(\sigma^1,\mathbb{C}^{\kappa(x)-1}(\Omega)\big)\big\}\\ \qquad \mbox{ for } x \in \mathbb{C}_\infty(\Omega)\setminus \Omega,\\[2mm]
\min\big\{\sigma^1 \in\mathbb{N}_q \colon x\in \mathbb{C}\big(\sigma^1,\Omega\big)\big\}\\ \qquad  \mbox{ for } x \in \Omega,
\end{cases}
\end{equation}
where $\mathbb{C}\big(\sigma^1,\mathbb{C}^{\kappa(x)-1}(\Omega)\big)$ is the robust controllable set to set $\mathbb{C}^{\kappa(x)-1}(\Omega)$ for mode $\sigma^1$, as per Definition~\ref{def:1-scssigma}.
\end{theorem}
\begin{proof}
Robust local attractivity of $\Omega$ follows from combining Proposition~\ref{prop:pseudoLyapunov}, by which the existence of a RCCS guarantees that a suitable AG-function exists, and Proposition~\ref{prop:L-function-implies-attractivity}, by which the existence of a suitable AG-function guarantees robust local attractivity.

We now show that $\sigma(x)$ in \eqref{eq:switchinglaw} is a stabilizing switching law. By Definition~\ref{def:kappa_function} of function $\kappa$, we have
\[x\in\mathbb{C}^{\kappa(x)}(\Omega)=\mathbb{C}\big(\mathbb{C}^{\kappa(x)-1}(\Omega)\big) = \bigcup_{\sigma^1\in\N_q}\mathbb{C}\big(\sigma^1,\mathbb{C}^{\kappa(x)-1}(\Omega)\big),\]
hence the set in \eqref{eq:switchinglaw} is non-empty; taking the minimum makes $\sigma(x)$ single-valued. 
By Definition~\ref{def:1-scssigma}, it is $\boldsymbol{\Phi}(x,\sigma(x))\subseteq\mathbb{C}^{\kappa(x)-1}(\Omega)$; thus, for every $z\in\boldsymbol{\Phi}(x,\sigma(x))$, $\kappa(z)\le \kappa(x)-1$. Consider $\bar \kappa:=\kappa(x)- \kappa(z)\ge 1$. Then, considering the AG-function defined in \eqref{eq:L-Omega} and relying on the inequality \eqref{eq:decreasingd} in Proposition~\ref{propo:decreasingd}, we have that 
\begin{align*}
L(z) &= \min_{y\in \partial_\X \mathbb{C}^{\kappa(z)}(\Omega)}d(y,\Omega) \\
&\le \min_{y\in \partial_\X \mathbb{C}^{\kappa(x)}(\Omega)}d(y,\Omega) - \sum_{i=1}^{\bar \kappa} h(\kappa(z)+i) \\
& \le L(x) - h(\kappa(x)),
\end{align*}
for all $z\in\boldsymbol{\Phi}(x,\sigma(x))$, which is the decrease condition in \eqref{eq:theorem3}–\eqref{eq:theorem7}, and thus, as shown before, guarantees robust local attractivity of $\Omega$ under the switching law $\sigma(x)$.
\end{proof}
\begin{remark}
Our result holds for generic nonlinear switched systems with bounded uncertainties. However, a limitation in the applicability of the switching law $\sigma(x)$ in \eqref{eq:switchinglaw} is that its implementation requires the construction of the controllable sets $\mathbb{C}^k(\Omega)$, $k=1,\dots,N$, with $N$ large enough to ensure that $x \in \mathbb{C}^N(\Omega)$. In fact, to compute $\sigma(x)$, one must check whether $x \in \mathbb{C}(\sigma^1,\mathbb{C}^{\kappa(x)-1}(\Omega))$ for some $\sigma^1$; this in turn requires determining $\kappa(x)$, which is obtained by finding the minimum index $k=0,1,\dots,N$ such that $x \in \mathbb{C}^k(\Omega)$.
For general nonlinear systems, given a RCCS $\Omega$, computing the set $\mathbb{C}^N(\Omega)$ is challenging, but then the switching law can be determined offline and applied without additional computational effort for all $x \in \mathbb{C}^N(\Omega)$.
In Section~\ref{sec:linear} we present an explicit method to construct the state-dependent switching law $\sigma(x)$ in~\eqref{eq:switchinglaw} for \textit{linear} switched systems with additive uncertainties.
\end{remark}

\begin{remark}
The construction of the set-based AG-function in Proposition~\ref{prop:pseudoLyapunov} is inherently conservative, since it relies on identifying a RCCS, which, as mentioned, is challenging for a generic nonlinear system with a generic uncertainty structure. Exploring alternative approaches to establish the existence of an AG-function (or construct one), beyond the contractive-set-based construction in Proposition~\ref{prop:pseudoLyapunov}, is therefore a direction for future work.
\end{remark}

\begin{figure}[ht]
	\centering	\includegraphics[width=0.85\linewidth]{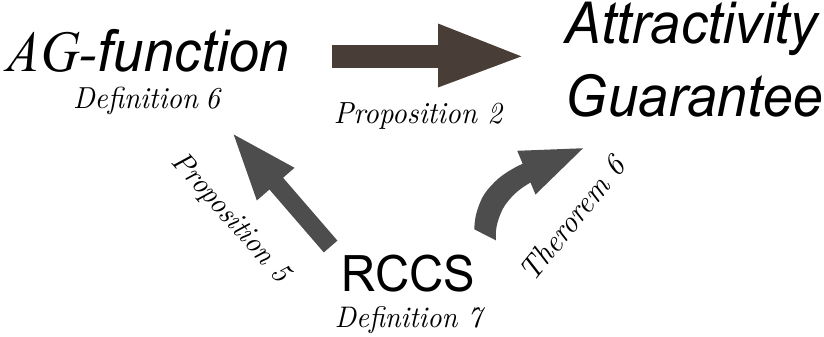}
	\caption{\small Scheme that illustrates the structure of our main results. The existence of an AG-function guarantees local attractivity of a set for system~\eqref{eq:SistOrig} (Proposition~\ref{prop:L-function-implies-attractivity}). The construction of a RCCS ensures the existence of a particular AG-function (Proposition~\ref{prop:pseudoLyapunov}). Then, the existence of a RCCS is sufficient to conclude local attractivity of the set (Theorem~\ref{theo:RCCS_AG}).}
	\label{fig:diagram}
\end{figure}

The proposed framework establishes local set-attractivity, but does not provide global attractivity or stability properties of the system.
Conversely, for \textit{uncertainty-free} \textit{linear} switched systems, the existence of a RCCS may imply global attractivity under suitable homogeneity assumptions; see~\cite{fiacchini2014necessary}.

Next, we briefly examine the specificities of the linear case in Section~\ref{sec:linear} and then, in Section~\ref{sec:AMR}, we consider the case study of control of nonlinear switching dynamics in the context of antimicrobial resistance, illustrating that the region of local attractivity, $\mathcal{L}_R(\Omega)$, can be considerably larger than the obtained RCCS $\Omega$. 

\subsection{Linear case}\label{sec:linear}

For a linear switched system with additive uncertainty
\begin{equation}\label{eq:linearss}
x(k+1) = A_{\sigma(k)} x(k) + w(k),
\end{equation}
where $x(k) \in \X$, $w(k) \in \W$, sets $\mathbb{X}$ and $\mathbb{W}$ are compact and contain the origin in their interior, $\sigma(k) \in \mathbb{N}_q$ is the active mode at time $k$, selected according to the switching law $\sigma \colon \N\rightarrow \mathbb{N}_q$, and all matrices $A_{\sigma(k)}$, $\sigma(k) \in \mathbb{N}_q$, are nonsingular, we can rely on a specific characterisation of robust controllable sets. 

Consider the Minkowski (Pontryagin) difference of two sets $\mathcal{A}\subset\X$ and $\mathcal{B}\subset\X$,
\begin{equation*}
\mathcal{A} \ominus \mathcal{B} = \{a \colon a + b \in \mathcal{A}, \text{ for all } b \in \mathcal{B}\}.
\end{equation*} 

Given a set $\Omega \subseteq \X$, with $\Omega \ominus \W \neq \emptyset$, the robust $1$-step controllable set to $\Omega$ (as per Definition~\ref{def:1-scs}) can be expressed as
\begin{equation}\label{eq:C_linear_case}
  \mathbb{C}(\Omega)= \bigcup\limits_{\sigma^1 \in \mathbb{N}_q} A^{-1}_{\sigma^1} \left(\Omega \ominus \W\right) \cap \X,
\end{equation}
since, if $x\in \bigcup_{\sigma^1} A^{-1}_{\sigma^1} \left(\Omega \ominus \W\right)$, there exists $\sigma^1\in\mathbb{N}_q$ such that $x\in A^{-1}_{\sigma^1} \left(\Omega \ominus \W\right)$, which implies $A_{\sigma^1} x \in \Omega \ominus \W$. Hence, $A_{\sigma^1}x+w\in\Omega$ for all $w\in\W$, meaning that $\boldsymbol{\Phi}(x,\sigma^1)\subseteq \Omega$. Hence, $x\in\mathbb{C}(\Omega)$. Conversely, if $x\in\mathbb{C}(\Omega)$, there must exist $\sigma^1\in\N_q$ such that $A_{\sigma^1} x + w \in \Omega$ for all $w\in\W$, or equivalently $A_{\sigma^1} x \in \Omega\ominus\W$, and thus
\[ x\in A^{-1}_{\sigma^1}(\Omega\ominus\W) \cap \X \subseteq \bigcup\limits_{\sigma^1 \in \mathbb{N}_q} A^{-1}_{\sigma^1} \left(\Omega \ominus \W\right) \cap \X. \]

\begin{remark}
For compact and convex sets $\Omega$ and $\mathbb{W}$, both containing the origin in their interiors,
$\mathbb{C}(\Omega)$ is a $C^*$-set (a compact, convex set that contains the origin in its interior and is star-convex with respect to the origin). Such sets can be treated using standard geometric tools from linear and convex analysis \citep{herceg2013multi}.
\end{remark}

Building upon \eqref{eq:C_linear_case}, we can extend the algorithm proposed by \cite{fiacchini2014necessary} for uncertainty-free linear switched systems (with $w = 0$), in a way that enables us to also account for bounded uncertainty $w \in \mathbb{W}$.
\begin{myalgorithm}\label{alg:Procedure_1}
    \item[i.] Set $k = 1$. Select $k_{stop}\in\N$ and a RCIS $\Omega_0 \subset \X$.
    \item[ii.] Compute $\Omega_{k} = \bigcup_{\sigma^1 \in \mathbb{N}_q} A^{-1}_{\sigma^1} \left(\Omega_{k-1} \ominus \W\right) \cap \mathbb{X}$
    \item[iii.] If $\Omega_0 \subseteq \inte_\X\Big(\bigcup_{j=1}^{k} \Omega_j\Big)$, stop successfully.
    \item[iv.] Else, if $k=k_{stop}$, stop unsuccessfully. 
    \item[v.] Otherwise, set $k \leftarrow k+1$ and go to step ii. 
\end{myalgorithm}

Algorithm~\ref{alg:Procedure_1} recursively computes the robust $k$-step controllable sets $\Omega_k=\mathbb{C}^k(\Omega_0)$ to the RCIS $\Omega_0$.  The algorithm ensures that $\Omega_1=\mathbb{C}(\Omega_0)$ is a RCCS if it stops successfully when $k=1$. Conversely, if the algorithm stops successfully when $k = N > 1$, then it guarantees that the RCIS $\Omega_0$ is robustly locally attractive and provides an inner approximation $\bigcup_{j=1}^{N} \Omega_j$ of $\mathbb{C}_\infty(\Omega_0)$ that allows us to compute offline the switching law \eqref{eq:switchinglaw}, which is well defined for all $x \in \bigcup_{j=1}^{N} \Omega_j$.

\begin{remark}\label{rem:Initial_Set_Omega_0}
The outcome of the algorithm critically depends on the choice of an initial RCIS $\Omega_0$. Recent approaches to characterise and construct invariant sets in this setting have been provided by \cite{danielson2019necessary,perez2025characterization}. 
As long as $\Omega_0$ is a RCIS, the condition at step ii. of Algorithm~\ref{alg:Procedure_1} could be equivalently replaced by $\Omega_0 \subseteq \inte_\X \Omega_k$, since $\bigcup_{j=1}^{k} \Omega_j = \Omega_k$ in view of Proposition~\ref{proper:nestedC}. However, considering the condition $\Omega_0 \subseteq \inte_\X\Big(\bigcup_{j=1}^{k} \Omega_j\Big)$ allows us to use the very same algorithm structure to find a RCIS: if an arbitrary convex set $\tilde\Omega_0$ is selected at step i. of Algorithm~\ref{alg:Procedure_1} and the algorithm stops successfully at step $k = \hat N$, then $\hat \Omega_{\hat N} := \bigcup_{j=1}^{\hat N} \tilde \Omega_j$, where $\tilde \Omega_{j} = \bigcup_{\sigma^1 \in \mathbb{N}_q} A^{-1}_{\sigma^1} \left(\tilde \Omega_{j-1} \ominus \W\right) \cap \mathbb{X}$, is a RCIS \citep{anderson2024stabilizability},
and we can subsequently apply Algorithm~\ref{alg:Procedure_1} to the RCIS $\Omega_0:=\hat \Omega_{\hat N}$ to assess its contractivity.
\end{remark}

\begin{remark}\label{rem:switchlaw}
\textbf{Computing the switching law $\sigma(x)$.} Given a RCIS $\Omega_0$, Algorithm~\ref{alg:Procedure_1} computes recursively the controllable sets 
\begin{equation*}
\mathbb{C}^k(\Omega_0)=\Omega_k =
\bigcup_{\sigma^1 \in \mathbb{N}_q}
\underbrace{
A^{-1}_{\sigma^1}\big(\Omega_{k-1}\ominus \W\big)
\cap \X
}_{\text{finite union of convex polytopes}} 
\end{equation*}
for $k=1,\dots,N$, needed to implement $\sigma(x)$ in~\eqref{eq:switchinglaw} for $x \in \Omega_N \setminus \Omega_0$.
Assuming that the initial set $\Omega_0$ and the sets $\X$ and $\W$ are convex polytopes, each set $\Omega_k$ can be represented as a finite union of convex polytopes, each admitting an $H$-representation of the form $\{x : Mx \le b\}$.
Once these sets are available, the computation of $\sigma(x)$ for $x \in \Omega_N \setminus \Omega_0$ requires two steps.
\begin{enumerate}
\item[I.] Determine the value of $\kappa(x)$, defined as the minimum index $k=1,\dots,N$ such that $x \in \Omega_k$. Since $\Omega_k$ is a finite union of polytopes, computing $\kappa(x)$ reduces to checking membership in at least one of the polytopes, each associated with a system of linear inequalities. \item[II.] Select the smallest $\sigma^1 \in \mathbb{N}_q$ such that
\[
x \in \mathbb{C}(\sigma^1, \mathbb{C}^{\kappa(x)-1}(\Omega_0))
=
A^{-1}_{\sigma^1}(\Omega_{\kappa(x)-1}\ominus\W)\cap\X,
\]
which again reduces to checking polytope membership.
\end{enumerate}

To compute $\sigma(x)$ for $x \in \Omega_0$, just select the smallest $\sigma^1 \in \mathbb{N}_q$ such that $x \in \mathbb{C}(\sigma^1, \Omega_0)
=
A^{-1}_{\sigma^1}(\Omega_0\ominus\W)\cap\X$.
Therefore, once the sets $\Omega_k$ have been computed offline using Algorithm~\ref{alg:Procedure_1}, the online implementation of $\sigma(x)$ consists solely of polytope membership tests.
\end{remark}

The next two examples illustrate how our approach particularises for linear switched systems \eqref{eq:linearss}.

\begin{example}\label{example1}
Consider a system of the form~\eqref{eq:linearss} with two modes associated with non-Hurwitz matrices
\begin{equation*}
A_1 = \begin{bsmallmatrix} 0.3 & -1.01 \\ -0.5 & -0.8 \end{bsmallmatrix}, \quad
A_2 = \begin{bsmallmatrix} -0.4 & 1.2 \\ 0.9 & -0.5 \end{bsmallmatrix},
\end{equation*}
set $\X = \{ (x_1, x_2) \in \mathbb{R}^2 \colon  | x_i | \leq 6, \,\, i=1,2 \}$, and set $\W = \{ (w_1, w_2) \in \mathbb{R}^2 \colon  | w_i | \leq 0.1, \,\, i=1,2 \}$.

To identify a RCIS, as discussed in Remark~\ref{rem:Initial_Set_Omega_0}, we first run Algorithm~\ref{alg:Procedure_1} initialised with the convex set $\tilde\Omega_0$ chosen as a polytope with $100$ faces that approximates the set $\{ x \in \X \colon  \| x \| \leq 1 \}$.
Since $\tilde\Omega_0\not \subseteq \tilde\Omega_1$, the set $\tilde\Omega_0$ is not a RCIS. However, after $3$ iterations, $\tilde\Omega_0 \subset \bigcup_{j=1}^{3}\tilde\Omega_j$, and hence the set $\hat\Omega_3:=\bigcup_{j=1}^{3}\tilde\Omega_j$ is a RCIS, formed by the union of $N_{sets} = 14$ polytopes (at each step $k$, we compute $N_{sets} = 2 (2^k - 1)$ polytopes).

To check contractivity of the RCIS $\Omega_0:=\hat\Omega_3$, we then run Algorithm~\ref{alg:Procedure_1} again with initial set $\Omega_0:=\hat\Omega_3$ and we reach $k_{stop}=3$, computing a total number of $112$ polytopes (since $\Omega_0$ is the union of $14$ polytopes).
Since the condition $\Omega_0\subseteq\inte_\X\big(\bigcup_{j=1}^3\Omega_j\big)$ is not fulfilled, the algorithm stops unsuccessfully.
The resulting inner approximation of $\mathbb{C}_\infty(\Omega_0)$ is given by $\tilde{\mathbb{D}}(\Omega_0) :=\bigcup_{j=1}^3\Omega_j$, which by construction is also an approximation of $\mathbb{C}_\infty(\tilde\Omega_0)$, because all trajectories originating from some $x \in \Omega_0$ can be driven to $\tilde\Omega_0$ in three steps. Once a trajectory enters $\tilde{\Omega}_0$, it can never escape the RCIS $\Omega_0 \supseteq \tilde{\Omega}_0$.
Since the approach does not identify a RCCS, our framework cannot be applied to guarantee the existence of an AG-function; however, the approach still allows us to compute offline, as discussed in Remark~\ref{rem:switchlaw}, the switching law $\sigma(x)$ in \eqref{eq:switchinglaw} that drives to the set $\tilde{\Omega}_0$ all trajectories emanating from $\tilde{\mathbb{D}}(\Omega_0)$, and keeps them inside the RCIS $\Omega_0$.

Fig.~\ref{fig:LinearExample1}-A shows the sets $\tilde\Omega_0$ (black contour); the disturbance set $\W$ (gray); the set $\tilde\Omega_0\ominus\W$ (orange), and the Pontryagin difference in the inset; the set $\Omega_0$ (teal);  and the set $\tilde{\mathbb{D}}(\Omega_0)$ (light blue). As expected, we have $\tilde \Omega_0 \subset \Omega_0 \subset \tilde{\mathbb{D}}(\Omega_0)$.
Fig.~\ref{fig:LinearExample1}-B shows the closed-loop trajectories of the system $x(k+1) = A_{\sigma(x(k))}x(k) + w(k)$ emanating from the initial condition $x_0 = (-3, 1.2) \in \tilde{\mathbb{D}}(\Omega_0)$. The switching law $\sigma(x(k))$, which selects the active subsystem $A_{\sigma(x(k))}$ according to \eqref{eq:switchinglaw}, is shown in Fig.~\ref{fig:LinearExample1}-C. A total of $100$ controlled trajectories are simulated to evaluate the robustness of the switching law $\sigma(x)$ under additive uncertainty $w \in \mathbb{W}$. At each time step, the uncertainty $w$ is randomly drawn from the compact set $\mathbb{W}$ with a uniform probability distribution.

Considering a larger $k_{stop}=6$ leads to the computation of a total number of $1008$ polytopes and the inner approximation $\bigcup_{j=1}^6\Omega_j$ for $\mathbb{C}_\infty(\Omega_0)$; however, also in this case
$\Omega_0\subseteq\inte_\X\big(\bigcup_{j=1}^6\Omega_j\big)$ is not fulfilled, and the algorithm stops unsuccessfully, although a suitable switching law $\sigma(x)$ can be constructed for all $x \in \bigcup_{j=1}^6\Omega_j$.
\end{example}
\begin{figure}[ht]
	\centering
	\includegraphics[width=0.85\linewidth]{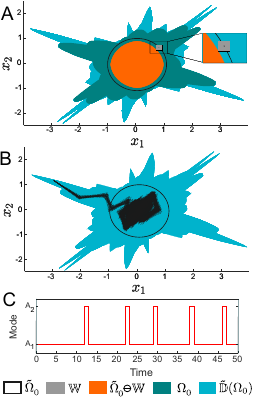}
	\caption{\small Example~\ref{example1}. \textbf{A.} The boundary of $\tilde\Omega_0$ is denoted by the black contour. Set $\W$ is in gray, and the Pontryagin difference $\tilde\Omega_0 \ominus \mathbb{W}$ in orange (highlighted in the inset). The set $\Omega_0$ is in teal and the set $\tilde{\mathbb{D}}(\Omega_0):=\bigcup_{j=1}^{3} \Omega_j$ in light blue. \textbf{B.} Controlled trajectories under the switching law $\sigma(x(k))$, for $100$ different random uncertainty sequences $w^{50} \in \mathbb{W}^{50}$. \textbf{C.} The switching law $\sigma(x(k))$ from \eqref{eq:switchinglaw}.}
	\label{fig:LinearExample1}
\end{figure}

Algorithm~1 by \cite{fiacchini2014necessary}, designed to establish stability of a set in the \textit{uncertainty-free} setting, can be applied to the same system \eqref{eq:linearss} with matrices $A_1$ and $A_2$, without uncertainties ($w=0$), and with the same initial set $\tilde{\Omega}_0$. Their algorithm converges in two iterations, thereby establishing the stability of $\tilde{\Omega}_0$ for the uncertainty-free linear switched system. This contrasts with our results, where robust attractivity could not be established due to the presence of the uncertainties.

\begin{example}\label{example2}
Now consider a system of the form \eqref{eq:linearss} with $\X = \{ (x_1, x_2) \in \mathbb{R}^2 \colon  -3\le x_1 \leq 4,~ -10\le x_2 \leq 10 \}$, $\W = \{ w \in \mathbb{R}^2 \colon  \| w \| \leq 0.05 \}$, and four modes associated with matrices
\begin{align*}
A_1 &= \begin{bsmallmatrix} -0.3912 & 0.9743 \\ -1.0409 & 0.1366 \end{bsmallmatrix}, \quad
A_2 = \begin{bsmallmatrix} 0.0609 & 1.0481 \\ -0.8837 & 0.5669 \end{bsmallmatrix}, \quad \\
A_3 &= \begin{bsmallmatrix} 0.9743 & 0.3912 \\ 0.1366 & 1.0409 \end{bsmallmatrix}, \quad
A_4 = \begin{bsmallmatrix} -1.0481 & 0.0609 \\ -0.5668 & -0.8837 \end{bsmallmatrix}.
\end{align*}

We run Algorithm~\ref{alg:Procedure_1} initialised with a convex set $\Omega_0$ chosen as a polytope with $100$ faces that approximates the set $\{ x \in \X \colon  \| x \| \leq 1 \}$.  The algorithm stops successfully after one iteration, as shown in Fig.~\ref{fig:LinearExample2}.
In fact, $\Omega_0\subset \inte_\X\big(\Omega_1)$, where $\Omega_1=\mathbb{C}(\Omega_0)$, and hence $\Omega_0$ is a RCIS and $\Omega_1$ is a RCCS. Also, for $\Omega_2=\mathbb{C}^2(\Omega_0)$, we have $\Omega_1\subset \inte_\X(\Omega_2)$, consistently with Proposition~\ref{propo:cont_nesetedC}. 
Since $\Omega_1$ is a RCCS, an AG-function exists with respect to $\Omega_1$ (Proposition~\ref{prop:pseudoLyapunov}), which is therefore \textit{robustly locally attractive} (Theorem~\ref{theo:RCCS_AG}). The set $\tilde{\mathbb{D}}(\Omega_0) := \bigcup_{j=1}^6 \Omega_j$, given by the union of $5460$ polytopes, is an inner approximation of $\mathbb{C}_\infty(\Omega_0)$; by construction, it is also an approximation of $\mathbb{C}_\infty(\Omega_1)$. The switching law~\eqref{eq:switchinglaw}, starting from any point in $\tilde{\mathbb{D}}(\Omega_0)$, drives the system to the RCCS $\Omega_1$ by steering it to the RCIS $\Omega_0$, which is contained in $\Omega_1$.
\end{example}
\begin{figure}[ht]
	\centering
	\includegraphics[width=0.85\linewidth]{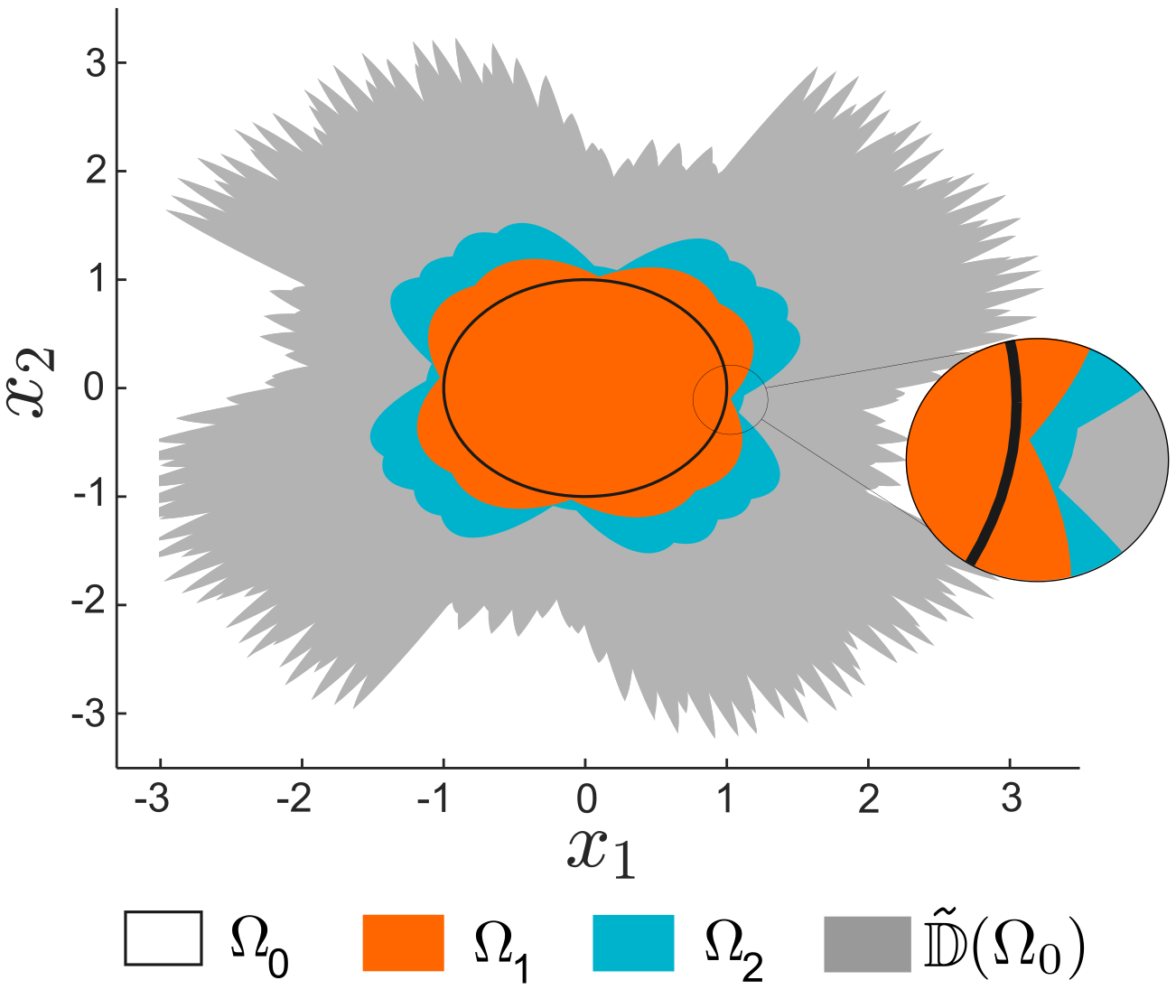}
	\caption{\small  Example~\ref{example2}. Given $\Omega_0$, whose boundary is denoted by the black contour, the set $\Omega_1=\mathbb{C}(\Omega_0)$ in orange is a RCCS. For the blue set $\Omega_2=\mathbb{C}^2(\Omega_0)$, it holds $\Omega_1\subset \inte_\X\big(\Omega_2\big)$ as expected. Set $\mathbb{C}_\infty(\Omega_0)$ is approximated by $\tilde{\mathbb{D}}(\Omega_0) := \bigcup_{j=1}^6 \Omega_j$ in grey.}
	\label{fig:LinearExample2}
\end{figure}

\section{Nonlinear Case Study: AMR Dynamics}\label{sec:AMR}

Consider the nonlinear continuous-time dynamic model proposed by \cite{anderson2025failure} to capture the time evolution of antibiotic-susceptible and antibiotic-resistant bacterial populations in the host, in the presence of immune-system response and antibiotics:
\begin{equation}\label{eq:b-s-model}
\begin{cases}
\dot b = \alpha\, b\!\left(1-\tfrac{b}{N}\right) - I(b)\,b - D(u)\,s \\
\dot s = \alpha\, s\!\left(1-\tfrac{b}{N}\right) - I(b)\,s - D(u)\,s - M(u)\,s,
\end{cases}
\end{equation}
where $s$ is the drug–susceptible subpopulation and $b$ denotes the total bacterial load given by susceptible and resistant subpopulations. The states are constrained to the feasible domain $\X := \{(b,s)\in\mathbb{R}^2:\ 0 \le s \le b \le N\}$. 
All subpopulations have net growth rate $\alpha>0$, affected by a  logistic term with carrying capacity $N>0$ (i.e., we are assuming there is no \emph{cost of resistance}, see \citealt{gandra2014economic}). The killing rate due to the host immune response is $I(b)=\beta\,\tfrac{K}{K+b}$, where $\beta>\alpha$ is the maximum killing rate and $K<N$ is the total bacterial load yielding half-maximal rate. The antibiotic killing rate is the Hill-type function $D(u)$, where $u$ is the antibiotic concentration, and the antibiotic-induced resistance rate is the threshold function $M(u)$; see \cite{anderson2025failure} for details. 

We consider here two modes with different constant antibiotic levels: immune response alone, with $u\equiv 0$ (mode~1); immune response plus antibiotic dosing, with $u\equiv u_M$ (mode~2), where $u_M$ is the Minimum Inhibitory Concentration (MIC) of the antibiotic \citep{kowalska2021minimum}.  
Denoting the state by $z=(b,s)\in \X$, we consider the continuous-time nonlinear switched system
\[
\dot z = s_\sigma(z), \quad \sigma\in\{1,2\},
\]
with 
\begin{align*}
s_1(z)&=
\begin{bsmallmatrix}
\alpha b(1-b/N) - I(b)\,b\\
\alpha s(1-b/N) - I(b)\,s
\end{bsmallmatrix},\\
s_2(z)&=
\begin{bsmallmatrix}
\alpha b(1-b/N) - I(b)\,b - D_M\,s\\
\alpha s(1-b/N) - I(b)\,s - D_M\,s - \mu\,s
\end{bsmallmatrix},
\end{align*}
where $D_M:=D(u_M)>\alpha$ is the constant death rate of susceptible bacteria exposed to antibiotic concentration $u_M$, while $\mu:=M(u_M)>0$ is the corresponding antibiotic-induced resistance rate.

We consider the discretisation of the continuous-time switched system obtained by applying the forward Euler method with sampling period $\Delta$:
\begin{align}\label{eq:Ex_system}
    x(k+1) &=\; x(k) + \Delta\, s_{\sigma(k)}(x(k)) + w(k) \\ \notag
    &:=\; f_{\sigma(k)}(x(k),w(k)),
\end{align}
where $s_{\sigma(k)}(x(k))$ is the active subsystem vector field at time $t_k=k\Delta$, and $w(k)=(w_b(k),w_s(k))$ represents the discretisation error, which lies in a bounded set $\W$ provided that the sampling step $\Delta$ is sufficiently small \citep{hairer1993solving}.
To the discrete-time nonlinear switched system in \eqref{eq:Ex_system}, we can then apply the  proposed methodology.

Following the criterion in Remark~\ref{rem:check_RCCS}, we design a numerical procedure to identify a RCCS for system~\eqref{eq:Ex_system}.
Generate a dense grid of initial conditions in the set
\[
D_0 = \{(b,s)\in\X : b \le b_{\max}\}
\]
with $b_{max}\le N$ and, for $b_0 < b_{max}$, define a candidate set 
\[
\Omega_0 = \{(b,s)\in\X : b \le b_0\}.
\]
For each grid point $x_i=(b_i,s_i)\in D_0$, we compute the one-step evolution of the system under both modes and three uncertainty realisations:
$w_1=(\bar w_b,0)$, $w_2=(0,\bar w_s)$, $w_3=(\bar w_b,\bar w_s)$,
where $\bar w_b$ and $\bar w_s$ are the maximum values of $w_b$ and $w_s$, respectively.

For each mode $\sigma^1 \in \N_q$, we define the set of next states
\begin{align*}
X^+_{i,\sigma^1} &:= 
\{\, f_{\sigma^1}(x_i,w_1),\; f_{\sigma^1}(x_i,w_2),\; f_{\sigma^1}(x_i,w_3)\,\}\\ & \subseteq \boldsymbol{\Phi}(x_i,\sigma^1).
\end{align*}
An initial condition $x_i$ is labelled as \textit{inside} the robust $1$-step controllable set $\mathbb{C}(\Omega_0)$ if
\begin{equation}\label{eq:Algorithm_Nonlinear}
X^+_{i,\sigma^1} \subseteq \Omega_0 \text{ for some } \sigma^1\in\{1,2\}.
\end{equation}

We construct a boundary of the set of points $x_i$ fulfilling condition~\eqref{eq:Algorithm_Nonlinear} by connecting the outermost admissible grid points along each horizontal slice, suitably ordered, with a closed polygonal curve. The region enclosed by this curve provides an approximation of $\mathbb{C}(\Omega_0)$. The criterion in Remark~\ref{rem:check_RCCS} can be used to check if the set $\mathbb{C}(\Omega_0)$ is a RCCS.

\begin{remark}
Considering only $w_1$, $w_2$, and $w_3$ is sufficient since they represent the worst-case uncertainties at the positive upper corners of $\W$, and the uncertainty appears linearly (additively) in the dynamics. 
Negative values of $w_b$ or $w_s$ yield smaller state increments, which in this case favour contractivity.
\end{remark}

For all our numerical experiments, we fix step size $\Delta = 0.1h$ and uncertainty bounds $\bar w_b = \bar w_s = 5$, while the model parameters are the same as in \cite{anderson2025failure}.

Fig.~\ref{fig:RCCS} illustrates the proposed methodology applied to three candidate sets $\Omega_0$, using a grid of $11,325$ points in $D_0$ ($b_{\max}=150,000$).
Fig.~\ref{fig:RCCS}-A shows the case $b_0 = 1,000$; the inclusion $\Omega_0 \subseteq \inte_\X\big(\mathbb{C}(\Omega_0)\big)$ holds, confirming the existence of a RCCS for system~\eqref{eq:Ex_system}.
If we increase to $b_0 = 10,000$, we also obtain a RCCS (see Fig.~\ref{fig:RCCS}-B). 
Conversely, for $b_0=120,000$, the inclusion 
$\Omega_0 \subseteq \inte_\X\big(\mathbb{C}(\Omega_0)\big)$ does not hold for this candidate $\Omega_0$ and the selected grid
(see Fig.~\ref{fig:RCCS}-C).

\begin{figure*}[ht]
	\centering
    \includegraphics[width=0.95\linewidth]{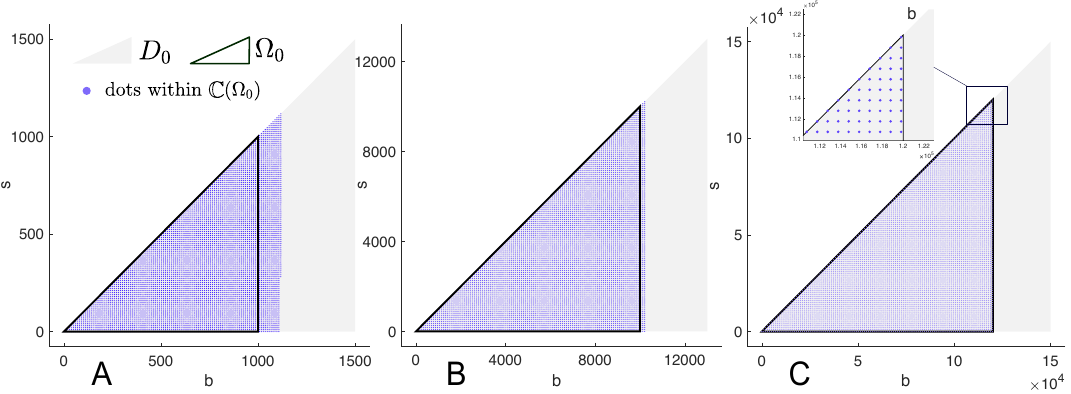}
	\caption{\small The grid in the set $D_0$ (with $b_{\max}\le 1.6\times10^6$) contains $11,325$ initial points uniformly distributed in the three cases. Blue points denote initial conditions $x_i\in D_0$ for which $X^+_{i,\sigma^1} \subseteq \Omega_0$ after one step for some $\sigma^1\in\{1,2\}$, to assess whether 
	$\Omega_0 \subseteq \inte_\X\big(\mathbb{C}(\Omega_0)\big)$ holds for the set $\Omega_0$ associated with the considered value $b_0$. For $b_0=1,000$ (\textbf{A.}) and $b_0=10,000$ (\textbf{B.}), the inclusion holds, and hence the considered $\Omega_0$ is a RCCS. For $b_0=120,000$ (\textbf{C.}), the inclusion does not hold for the adopted grid of points.}
	\label{fig:RCCS}
\end{figure*}

\begin{figure*}[ht]
	\centering
	\includegraphics[width=0.95\linewidth]{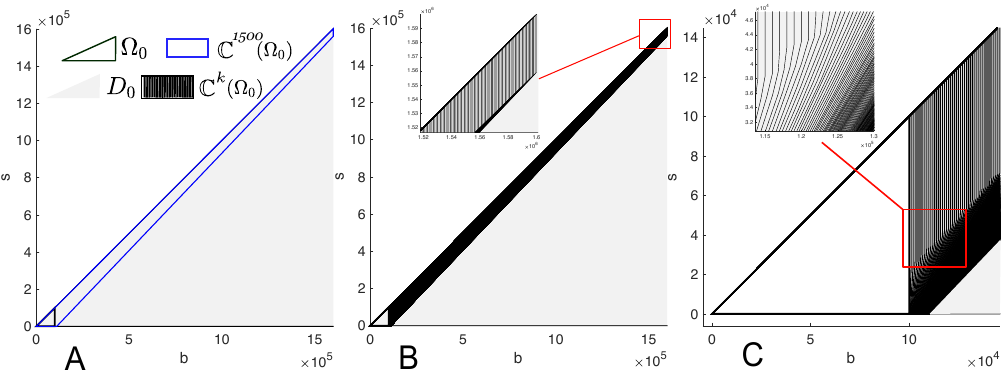}
	\caption{\small Approximation of $\mathbb{C}_\infty(\Omega_0)$ in the set $D_0$ (with $b_{\max}\le 1.6\times10^6$) via $\tilde{\mathbb{D}}(\Omega_0)=\bigcup_{k=1}^{1500}\mathbb{C}^k(\Omega_0)$, where $\Omega_0$ is the set associated with $b_0= 100{,}000$, for which the inclusion $\Omega_0 \subseteq \inte_\X\big(\mathbb{C}(\Omega_0)\big)$ holds.
\textbf{A.} The triangular RCCS $\Omega_0$ and its approximated robust infinite controllable set $\mathbb{C}^{1500}(\Omega_0)$ are the sets enclosed in the black and blue contours, respectively. 
\textbf{B.} The RCCS $\Omega_0$ is shown along with $\mathbb{C}^{k}(\Omega_0)$ for $k=1,\ldots,1500$. 
\textbf{C.} The RCCS $\Omega_0$ is shown along with a zoom into the first controllable sets, confirming the expected inclusion $\mathbb{C}^{k-1}(\Omega)\subseteq \inte_\X\big(\mathbb{C}^k(\Omega)\big)$.}
	\label{fig:Domain}
\end{figure*}

The existence of a RCCS $\Omega_0$, shown in Fig.~\ref{fig:RCCS}, implies the existence of an AG-function and, consequently, the attractivity of the identified RCCS under the dynamics~\eqref{eq:Ex_system}.

We then aim to approximate $\mathbb{C}_\infty(\Omega_0)$, i.e., the set of initial conditions from which the bacterial population can be driven to $\Omega_0$ under antibiotic action and/or immune response.
To approximate $\mathbb{C}_\infty(\Omega_0)$ within $D_0$ with $b_{\max} \le 1.6\times10^6$, we select a set $\Omega_0$ with $b_0 \le 10,000$, for which the method ensures $\Omega_0 \subset \inte_\X\big(\mathbb{C}(\Omega_0)\big)$, and then estimate $\mathbb{C}_\infty(\Omega_0)$ by computing $\mathbb{C}^k(\Omega_0)$ for $k = 1,\ldots,1500$, thus approximating the infinite union through a finite union of sets.
Fig.~\ref{fig:Domain}-A shows the estimated approximation $\tilde{\mathbb{D}}(\Omega_0)=\mathbb{C}^{1500}(\Omega_0)$ within $D_0$; Fig.~\ref{fig:Domain}-B shows all sets $\mathbb{C}^k(\Omega_0)$ for $k=1,\ldots,1500$; and Fig.~\ref{fig:Domain}-C provides a closer view of the first controllable sets, highlighting that, as expected, $\mathbb{C}^{k-1}(\Omega_0)\subseteq \inte_\X\big(\mathbb{C}^k(\Omega_0)\big)$.

\section{Conclusion}

We presented a methodology to certify set-attractivity for uncertain nonlinear switched systems with state constraints, where the switching law is a decision variable. The approach is based on establishing the existence of an AG-function, which is a set-based map providing a decrease condition on the distance between the system trajectory and a prescribed target set in the state space.
Building upon controllability properties of contractive sets, we proposed a constructive procedure to obtain a particular AG-function. Importantly, the existence of a robust control contractive set guarantees the existence of an AG-function, and hence the robust local attractivity of the set itself.

The applicability of the methodology across different system classes was demonstrated by numerical simulations for linear examples and for a nonlinear case study arising in antimicrobial resistance dynamics.

Directions for future work include seeking milder and less conservative conditions to ensure the existence of an AG-function, beyond those derived from contractive sets, as well as performance-oriented approaches to approximate the region of attraction and to select, in the switching law $\sigma(x)$ in \eqref{eq:switchinglaw}, the optimal mode $\sigma^1$ (instead of the smallest) according to a desired performance-related functional.
Additionally, the proposed methodology can be further developed for the design of sequential drug therapies in antimicrobial resistance models.

\begin{ack}
We are grateful to Rami Katz for helpful suggestions, particularly concerning the proof of Proposition~\ref{propo:cont_nesetedC}.  We also thank Mirko Fiacchini for helpful discussions on the topics of this work.
\end{ack}

\bibliography{Reference}

\end{document}